\renewcommand\footnotetextcopyrightpermission[1]{} 
\newcommand{\IR}{\mathds{R}}
\newtheorem{thm}{Theorem}
\newtheorem{lem}[thm]{Lemma}
\tikzset{external/only named=true}
\newcommand{\dup}{\delta_\uparrow}
\newcommand{\ddo}{\delta_\downarrow}
\newcommand{\fup}{f_\uparrow}
\newcommand{\fdo}{f_\downarrow}
\newcommand{\figPath}[1]{figures/#1}
\newcommand{\NOR}{\texttt{NOR}}
\newcommand{\dmin}{\delta_{\mathrm{min}}}
\newcommand{\vth}{V_{th}}
\newcommand{\vdd}{V_{DD}}
\newcommand{\gnd}{\textit{GND}}
\newcommand{\vout}{V_{out}}
\newcommand{\dd}{\mathrm{d}}
\newcommand{\va}{V_{A}}
\newcommand{\vb}{V_{B}}
\newcommand{\vint}{V_{int}}
\newcommand{\cint}{C_{int}}
\newcommand{\cout}{C}
\newcommand{\nmos}{nMOS}
\newcommand{\pmos}{pMOS}
\newcommand{\ohm}{(OHM)}
\newcommand{\on}{\mbox{\emph{on}}}
\newcommand{\off}{\mbox{\emph{off}}}
\def\C{{\mathcal C}}
\newcommand{\IN}{\mathds{N}}
\tikzstyle{binary place}=[place,circle, double]
\tikzstyle{node}=[circle,draw=black,thick,minimum size=9mm]
\tikzstyle{dest}=[circle,draw=black!50,fill=black!20,thick,minimum
\tikzstyle{post}=[->,thick]
\tikzstyle{pre}=[<-,thick]
\tikzstyle{every transition}=[fill,minimum width=1cm,minimum height=2mm]
\tikzstyle{Atransition}=[transition,fill,minimum width=1cm,minimum height=2mm]
\tikzstyle{Otransition}=[transition,fill=white,minimum width=1cm,minimum height=2mm]
\tikzstyle{THtransition}=[transition,fill=white,minimum width=4mm,minimum height=1cm]
\tikzstyle{Tdelay} = [draw, rectangle, rounded corners,
\tikzstyle{Tfunction} = [draw, rectangle,
\tikzstyle{Tsignal} = [draw,fill=black,circle, size=1mm]
\tikzstyle{ra} = [draw,thick,double,double distance=1.0pt,->]
\tikzstyle{r} = [draw,->,line width=0.5pt]
\crefname{equation}{}{}
\author{Arman Ferdowsi}
\affiliation{%
  \institution{TU Wien}
  \city{Vienna}
  \country{Austria}}
\email{aferdowsi@ecs.tuwien.ac.at}
\author{Matthias Függer}
\affiliation{%
  \institution{CNRS \& LMF, ENS Paris-Saclay, Université Paris-Saclay \& Inria}
  \city{Gif-sur-Yvette}
  \country{France}}
\email{mfuegger@lsv.fr}
\author{Thomas Nowak}
\affiliation{%
  \institution{Université Paris-Saclay, CNRS, ENS Paris-Saclay}
  \institution{Institut Universitaire de France}
  \city{Gif-sur-Yvette}
  \country{France}
  }
\email{thomas@thomasnowak.net}
\author{Ulrich Schmid}
\affiliation{%
  \institution{TU Wien}
  \city{Vienna}
  \country{Austria}}
\email{s@ecs.tuwien.ac.at}
\keywords{mode-switched ordinary differential equations;
thresholding operator;
continuity;
circuit delay models;
faithfulness}
\title{Continuity of Thresholded Mode-Switched ODEs and Digital Circuit Delay Models}
\thanks{This research was supported by the Austrian Science Fund (FWF) project DMAC (grant no. ~\mbox{P32431})
  and the ANR project DREAMY (ANR-21-CE48-0003).}
\begin{document}

\begin{abstract}
Thresholded mode-switched ODEs are restricted dynamical systems that switch ODEs depending on digital input signals only, and produce a digital output signal by
thresholding some internal signal.
Such systems arise in recent digital circuit delay models, where the
analog signals within a gate are governed by ODEs that change depending on the digital inputs.

We prove the continuity of the mapping from digital input signals to digital output signals for a large class of thresholded mode-switched ODEs. This continuity property is known to be instrumental for ensuring the faithfulness of the model w.r.t.\
propagating short pulses. We apply our result to several instances of such digital delay models, thereby proving them to be faithful.
\end{abstract}

	\maketitle	
	
	\section{Introduction}
	\label{sec:intro}

A natural class of hybrid systems can be described by
  the dynamics of a continuous process, which is controlled by externally supplied digital mode switch signals, and provides a digital output based
  on whether some internal signal crosses a threshold, see \cref{fig:switched}
  for an illustration.   
Examples are digitally controlled thermodynamic processes,
  hydrodynamic systems, and, in particular, digital integrated circuits.
The continuous dynamics of these systems are described by
  \emph{Ordinary Differential Equations} (ODEs)
  for the temperature, the pipe's pressures and fill-levels, 
  or the gate's currents and voltages over time.
Digital mode switches are used to switch between ODE systems, e.g., by
  turning on a heater, closing a valve, or applying an input transition
  to a gate's input.
The environment of the hybrid system is only notified if the temperature
  or fill-level crosses a threshold, or, in the case of a digital gate,
  is said to produce an output transition when some internal
  voltage crosses a threshold. 

\begin{figure}[tb]
  \centerline{
    \includegraphics[width=0.85\linewidth]{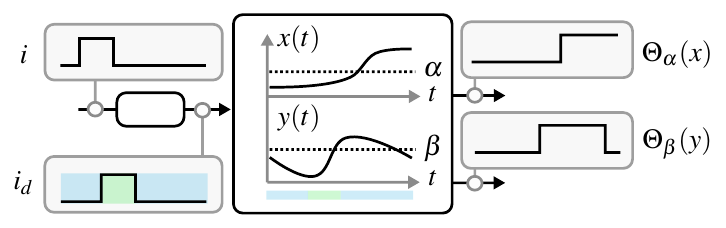}
  }
  \caption{Thresholded mode-switched ODE with a single mode input $i$, the delayed
    input $i_d$, two continuous
    states $x,y$, and two thresholded outputs $\Theta_\alpha(x)$ and $\Theta_\beta(y)$.}
  \label{fig:switched}
\end{figure}

In this work, we consider the composition of such hybrid systems in a
circuit, where digital threshold signals of one component drive mode switch
  signals of a downstream component.
We give conditions that ensure the continuity of the outputs of such circuits with respect to their inputs
  and provide two application examples in the context of circuit delay models.
The proven continuity property shows that small variations of the inputs lead to
  small variations of the output signal, a property that is necessary
  for digital circuit models to be consistent with physical analog ODE models.

\smallskip\noindent\textbf{Digital circuits, continuity, and faithful delay models.}
Analog simulations of digital circuits are time-consuming and
  are thus replaced by digital simulations whenever possible.
Typical application domains that require simulation of precise circuit transition times are
  particularly timing-critical, asynchronous parts of a circuit, e.g., inter-neuron links
  using time-based encoding in hardware-implemented spiking neural networks~\cite{BVMRRVB19},
  where the worst-case delay estimates provided by static timing analysis techniques are
  not sufficient for ensuring correct operation.

A mandatory prerequisite for dynamic timing analysis are
  digital delay models, which allow to accurately determine the input-to-output
  delay of every constituent gate in a circuit.
Suitable models must also account 
  for the fact that the delay of an individual
  signal transition usually depends on the previous transition(s), in particular, when they 
  were close.
The simplest class of such models are \emph{single-history delay models} \cite{FNS16:ToC,BJV06,FNNS19:TCAD}, where the 
  input-to-output delay $\delta(T)$ of a gate
  depends on the previous-output-to-input delay~$T$.

It has been proved by F\"ugger et al.~\cite{FNNS19:TCAD} that a certain continuity 
  property of single-history models is mandatory for the digital 
  abstraction to faithfully model the analog reality.
In particular, the predicted output
  transitions must not be substantially affected by arbitrarily short input glitches.
For example, the constant-low input signal and an arbitrarily short low-high-low pulse must
  produce arbitrary close gate output signals.
So far, the only delay model known to ensure this continuity property is the \emph{involution delay model} (IDM) \cite{FNNS19:TCAD}, 
  which consists of zero-time Boolean gates interconnected by single-input
  single-output involution delay channels.
An IDM channel is characterized by a delay function~$\delta$,
  which is a negative involution, i.e., $-\delta(-\delta(T))=T$.
In its generalized version, different delay functions $\dup$ resp.\ $\ddo$
  are assumed for rising resp.\ falling transitions, requiring 
  $-\dup(-\ddo(T))=T$.
Unlike all other existing delay models,
  the IDM has been proved to faithfully model 
  glitch propagation for the so-called short-pulse filtration problem~\cite{FNNS19:TCAD}, 
  and is hence the only candidate for a faithful delay model known so far~\cite{FNS16:ToC}.

It has also been shown~\cite{FNNS19:TCAD} that involution delay functions 
  arise naturally in a 2-state thresholded hybrid channel model, which
  consists of a pure delay component, a slew-rate limiter with a rising and falling switching
  waveform, and an ideal comparator (\cref{fig:analogy}):
The binary-valued input~$i_a$ is delayed by $\dmin>0$, which assures causality of
  channels, i.e., $\delta_{\uparrow/\downarrow}(0)>0$.
For every transition on~$i_d$, the generalized slew-rate limiter switches to the corresponding waveform
  ($\fdo/\fup$ for a falling/rising transition).
The essential property here is that the analog output voltage~$o_a$ is a \emph{continuous} (but not necessarily smooth) function of time.
Finally, the comparator generates the output~$o_d$ by
  digitizing~$o_a$ w.r.t.\ the discretization threshold voltage~$\vth$.

\begin{figure}[tb]
  \centerline{
    \includegraphics[width=0.55\linewidth]{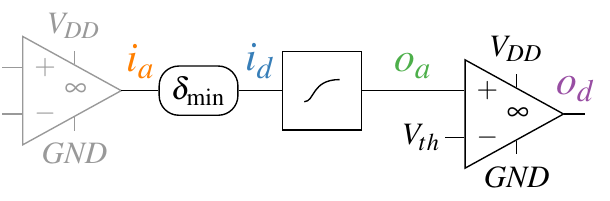}}
  \centerline{
    \includegraphics[width=0.8\linewidth]{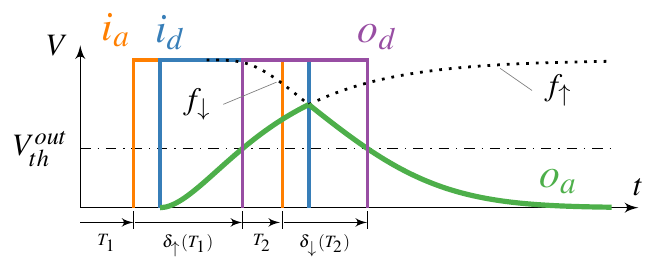}}\vspace{-0.3cm}
  \caption{Hybrid involution delay channel model (upper part) with a sample execution
    (bottom part). Adapted from~\cite{FNNS19:TCAD}.}\vspace{-0.4cm}
  \label{fig:analogy}
\end{figure}

\medskip

Whereas the accuracy of IDM predictions for single-input, single-output circuits 
  like inverter chains or clock trees turned out to be very good, this is less so for circuits 
  involving multi-input gates~\cite{OMFS20:INTEGRATION}.
It has been revealed by Ferdowsi et al.~\cite{FMOS22:DATE} that this is primarily due to the 
  IDM's inherent lack of properly covering
  output delay variations caused by \emph{multiple input switching} (MIS) in close
  temporal proximity~\cite{CGB01:DAC}, also known as the \emph{Charlie effect}:
compared to the \emph{single input switching}
  case, output transitions are sped up/slowed down with decreasing
  transition separation time on different inputs.
Single-input,
  single-output delay channels like IDM cannot exhibit such a behavior.

To capture MIS effects in a 2-input \NOR\ gate, Ferdowsi et al.~\cite{FMOS22:DATE} 
  hence proposed an alternative digital delay model based on a 4-state hybrid 
  gate model.
It has been obtained by replacing the 4 transistors in
  the RC-model of a CMOS \NOR\ gate by ideal zero-time switches, which
  results in one mode per possible digital state of the inputs 
  $(A,B)\in \{(0,0), (0,1), (1,0), (1,1)\}$. 
In each mode, the voltage of the the output signal and an internal node
  are governed by constant-coefficient first order ODEs.
When an input signal changes its state, the system
  switches to the new mode and its corresponding ODEs.

Albeit digitizing this hybrid gate model, using a comparator with a suitable
  threshold voltage $\vth$ as in \cref{fig:analogy}, leads to a 
  quite accurate digital delay model, it turned out to still fail to 
  capture the MIS delay for a rising output transition.
In a follow-up paper \cite{ferdowsi2022accurate}, Ferdowsi et.~al.\ hence introduced a 
  refined gate delay model, where the switching-on of the pMOS transistors is not instantaneous,
  but rather governed by a simple time evolution function $\sim 1/t$, inspired
  by the Shichman-Hodges transistor model~\cite{ShichmanHodges}.
The resulting 4-state hybrid model consists of a single not-constant-coefficient first-order ODE 
  per mode, and has been shown to accurately model MIS effects.

Whereas the experimental evaluation of the modeling accuracy of the
  hybrid models discussed above shows that they outperform the simple 
  IDM model \cite{OMFS20:INTEGRATION}, it is not clear whether they are
  also \emph{faithful} digital delay models.
What would be needed here is a proof that the digital delay models 
  obtained by digitizing hybrid models
  satisfy the continuity property required for faithfulness.

\smallskip\noindent\textbf{Contributions.}
Our paper answers this question in the affirmative.
More generally, we prove that any thresholded hybrid model like the one shown in
  \cref{fig:switched} that satisfies some mild 
  conditions on their ODEs results in a continuous digitized hybrid model.
We then show that the above hybrid gate models
  fall into this category, and that the proven continuity implies faithful short-pulse
  propagation of any such model. Since the square of a signal is (proportional to) 
  its power, this also implies a continuity property from the input signal power 
to the output signal power. Consequently, these delay models are indeed promising 
candidates for the correct and timing+power-accurate 
simulation of digital circuits.
In more detail:
\begin{enumerate}
\item[(1)] We show that any hybrid model, where mode $m$ 
  is governed by a system of first-order ODEs $\frac{dx}{dt} = f_m(t,x)$,
  leads to a continuous digital delay model, provided all the $f_m$ are 
  continuous in $t$ and Lipschitz continuous in $x$, with a common 
  Lipschitz constant for every $t>0$ and $m$.

\item[(2)] We prove that the parallel composition of finitely many digitized hybrid gates
in a circuit result in a unique and Zeno-free execution, under some mild conditions
regarding causality. In conjunction with our
continuity result, we prove that the resulting model is faithful w.r.t.\ solving
the canonical short-pulse filtration problem.

\item[(3)] We demonstrate that the hybrid gate models proposed in 
  \cite{FMOS22:DATE, ferdowsi2022accurate} satisfy these properties, and are
  hence continuous and thus faithful. 
\end{enumerate}

\smallskip
\noindent\textbf{Paper organization.}
In \cref{sec:continuity}, we instantiate our general continuity result (\cref{thm:thr:cont}).
\cref{sec:gatemodels} presents our main continuity result for hybrid gate models (\cref{thm:channels:are:cont} and \cref{thm:digitizedmodels:are:cont}), and
\cref{sec:circuits} deals with circuit composition.
In \cref{sec:examples}, we provide
  examples for the hybrid models considered in this work:
  a simple heater from literature \cite{henzinger2000theory}, the simple
  hybrid gate model \cite{FMOS22:DATE}, and the advanced gate model \cite{ferdowsi2022accurate}.
Some conclusions and directions of future research are provided 
  in \cref{sec:conclusions}.

\section{Thresholded Mode-Switched ODEs}
\label{sec:continuity}

In this section, we provide a generic proof that every hybrid model that adheres to some mild
  conditions on its ODEs leads to a continuous digital delay model.
We start with proving continuity in the analog domain and then establish continuity
  of the digitized signal obtained by feeding a continuous real-valued
  signal into a threshold voltage comparator.
Combining those results will allow us to assert the continuity of digital delay channels
  like the one shown in \cref{fig:analogy}.

\subsection{Continuity of ODE mode switching}
\label{sec:contODE}

For a vector $x\in\IR^n$, denote by~$\lVert x\rVert$ its Euclidean norm.
For a piecewise continuous function $f:[a,b]\to\IR^n$, we write $\lVert f\rVert_1 = \int_a^b \lVert f(t) \rVert \, dt$ for its $1$-norm and
$\lVert f\rVert_\infty = \sup_{t\in[a,b]} \lVert f(t)\rVert$ for its supremum norm.
The projection function of a vector in $\IR^n$ onto its $k^\text{th}$ component, for $1 \leq k \leq n$,
is denoted by $\pi_k : \IR^n \to \IR$. 

In this section, we will consider non-autonomous first-order ODEs of the form 
$\frac{d}{dt}\,x(t) = f(t,x(t))$, where the non-negative $t \in \IR_+$ represents
the time parameter, $x(t) \in U $ for some arbitrary open set $U\subseteq \IR^n$,
$x_0 \in U$ is some initial value, and $f:\IR_+ \times U\to\IR^n$ is chosen
from a set $F$ of bounded functions that are continuous for $(t,x) \in [0,T] \times U$,
where $0 < T < \infty$, and Lipschitz continuous in $U$ with 
a common Lipschitz constant for all $t\in[0,T]$ and all choices 
of $f \in F$. It is well-known that every such ODE has a unique solution $x(t)$ 
with $x(0)=x_0$ that satisfies $x(t)\in U$ for $t\in[0,T]$, is continuous in $[0,T]$, 
and differentiable in $(0,T)$.

The following lemma shows the continuous dependence of the solutions of such ODEs 
on their initial values. To be more explicit, the exponential dependence of the Lipschitz constant on the time parameter allows temporal composition of the bound.
The proof can be found in standard textbooks on ODEs~\cite[Theorem~2.8]{teschl2012ordinary}.

\begin{lem}\label{lem:cont:wrt:initial:value}
Let $U\subseteq \IR^n$ be an open set and let $f:\IR \times U\to\IR^n$ be Lipschitz continuous with Lipschitz constant~$K$ for $t\in[0,T]$ with $T > 0$, and let $x,y:[0,T]\to U$ be continuous functions that are differentiable on $(0,T)$ such that $\frac{d}{dt}\,x(t) = f(t,x(t))$
and $\frac{d}{dt}\,y(t) = f(t,y(t))$
for all $t\in (0,T)$. Then, $\lVert x(t) - y(t) \rVert
\leq
e^{t K} \lVert x(0) - y(0) \rVert$ for all $t\in[0,T]$.
\end{lem}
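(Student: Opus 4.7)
The plan is to reduce the claim to the classical Grönwall estimate. First I would reformulate the two ODEs as Volterra integral equations: since $x$ and $y$ are continuous on $[0,T]$ and differentiable on $(0,T)$ with $\frac{d}{dt}\,x(t) = f(t,x(t))$ (and similarly for $y$), and since $f$ is bounded along the trajectories (being continuous on $[0,T]\times U$ and Lipschitz in the second argument), the fundamental theorem of calculus applies and yields
\begin{equation*}
x(t) = x(0) + \int_0^t f(s,x(s))\,ds, \qquad y(t) = y(0) + \int_0^t f(s,y(s))\,ds
\end{equation*}
for every $t \in [0,T]$.

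Next, I would subtract these two identities, take Euclidean norms, and invoke the triangle inequality together with the Lipschitz assumption $\lVert f(s,x(s)) - f(s,y(s))\rVert \leq K\,\lVert x(s)-y(s)\rVert$, to obtain
\begin{equation*}
\lVert x(t) - y(t)\rVert \,\leq\, \lVert x(0)-y(0)\rVert \,+\, K\int_0^t \lVert x(s)-y(s)\rVert\,ds.
\end{equation*}
Setting $u(t) = \lVert x(t)-y(t)\rVert$ and $\alpha = \lVert x(0)-y(0)\rVert$, this is exactly the hypothesis $u(t) \leq \alpha + K\int_0^t u(s)\,ds$ of the integral form of Grönwall's inequality, whose conclusion $u(t) \leq \alpha\,e^{tK}$ is precisely the bound we want.

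The only step requiring any care is the passage from the differential to the integral formulation, since $x$ and $y$ are only assumed differentiable on the open interval $(0,T)$ and not at the endpoints. I would handle this by first applying the fundamental theorem of calculus on every subinterval $[\varepsilon,t] \subset (0,T)$, where the derivative is a continuous and bounded $\IR^n$-valued function, and then letting $\varepsilon \to 0^+$; the continuity of $x$ and $y$ on the closed interval $[0,T]$ ensures that both sides of the identity pass to their limits. Beyond this minor technicality, the remainder is entirely routine and the only nontrivial ingredient is Grönwall's inequality itself.
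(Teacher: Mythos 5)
Your proof is correct and coincides with what the paper relies on: the paper does not prove this lemma itself but cites the standard textbook result (Teschl, Theorem~2.8), whose proof is exactly the Gr\"onwall argument you give — rewrite the ODEs as integral equations, subtract, apply the Lipschitz bound, and invoke the integral form of Gr\"onwall's inequality. Your handling of the endpoint differentiability issue is a reasonable extra precaution and does not change the substance.
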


A \emph{step function} $s:\IR_+ \to \{0,1\}$ is a right-continuous function 
with left limits, i.e., $\lim_{t\to t_0^+} s(t)=s(t_0)$ and $\lim_{t\to t_0^-} s(t)$
exists for all $t_0 \in \IR_+$.
A \emph{binary signal} $s$ is a step function $s:[0,T]\to\{0,1\}$,
a \emph{mode-switch signal} $a$ is a step function $a:[0,T]\to F$, $t\mapsto a_t$.

Given a mode-switch signal~$a$, a \emph{matching output signal} for~$a$ is 
a function $x_a:[0,T]\to U$ that satisfies
\begin{enumerate}
\item[(i)] $x_a(0) = x_0$,
\item[(ii)] the function~$x_a$ is continuous,
\item[(iii)] for all $t\in(0,T)$, if~$a$ is continuous at~$t$, then~$x_a$ is differentiable at~$t$ and $\frac{d}{dt}\,x_a(t) = a_t(t,x_a(t))$.
\end{enumerate}
For (iii), recall that the domain of $a$ is $F$.

\begin{lemma}[Existence and uniqueness of matching output signal] \label{lem:uniqueness}
Given a mode-switch signal~$a$, the matching output signal $x_a$ for~$a$
exists and is unique.
\end{lemma}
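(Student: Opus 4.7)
The plan is to decompose $[0,T]$ along the jumps of the step function $a$, apply the classical existence-and-uniqueness result asserted just before the lemma on each constant-mode piece, and glue the pieces using the continuity requirement (ii) at mode-switch times.

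First, let $0 = \tau_0 < \tau_1 < \tau_2 < \cdots$ enumerate the discontinuities of $a$ in $[0,T]$. By right-continuity, $a_t$ equals a fixed $f_i \in F$ on each half-open interval $[\tau_i, \tau_{i+1})$. I would then proceed by induction over $i$. For the base case, the initial-value problem $\frac{d}{dt}\,x(t) = f_0(t,x(t))$ with $x(\tau_0) = x_0$ admits, by the existence/uniqueness statement recalled in the paragraph preceding the lemma (Picard-Lindelöf applied to a bounded and globally $x$-Lipschitz vector field that is continuous in $t$), a unique solution on $[\tau_0,\tau_1]$, continuous there and differentiable in the interior. For the inductive step, assuming $x_a$ has been defined continuously on $[0,\tau_i]$, I extend it to $[\tau_i,\tau_{i+1}]$ as the unique solution of $\frac{d}{dt}\,x(t) = f_i(t,x(t))$ with the initial value $x_a(\tau_i)$ coming from the previous segment. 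By construction condition (i) holds, condition (ii) holds because matching the initial condition at each $\tau_i$ forces continuity there, and condition (iii) holds on each open constant-mode interval, since those are exactly the points where $a$ is continuous.

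Uniqueness follows in the same induction: any other matching signal $y_a$ must coincide with $x_a$ on $[\tau_0,\tau_1]$ by the uniqueness part of the cited ODE theorem; continuity of $y_a$ then forces $y_a(\tau_i) = x_a(\tau_i)$ at each switch, so the uniqueness statement again forces agreement on the next segment, and the induction propagates.

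The step I expect to be the main obstacle is an accumulation of mode-switch times, i.e., a point $\tau^\ast \in [0,T]$ with infinitely many $\tau_i$ to its left, since the ordinary induction does not cover $\tau^\ast$. Here I would exploit the \emph{common} uniform bound on $F$: on $[0,\tau^\ast)$ the constructed $x_a$ is uniformly Lipschitz in $t$, so $\lim_{t\to\tau^{\ast-}} x_a(t)$ exists and provides a continuous extension $x_a(\tau^\ast)$; the induction then restarts at $\tau^\ast$ with mode $a(\tau^\ast)$, which is well-defined by right-continuity of $a$. In the setting of interest to this paper, the discontinuity set of any considered $a$ is locally finite on $[0,T]$, so this complication does not actually arise and the above induction terminates in finitely many steps.
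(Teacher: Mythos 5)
Your proposal is correct and follows essentially the same route as the paper: enumerate the switching times of the step function $a$, solve the ODE for the active mode on each segment via the existence/uniqueness result recalled before the lemma, and glue at the switch times by matching initial values. You are in fact slightly more thorough than the paper's proof, which only spells out the construction — your explicit uniqueness induction and your observation that right-continuity with left limits on a compact interval rules out accumulating switch times are both welcome but not departures in method.
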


\begin{proof}
$x_a$ can be constructed inductively, by pasting together the solutions 
$x_{t_j}$ of $\frac{d}{dt}\,x_{t_j}(t) = a_{t_j}(t,x_{t_j}(t))$, where $t_0=0$
and $t_1 < t_2 < \dots$ are $a$'s switching times in $S_a$: 
For the induction basis $j=0$, we define $x_a(t):=x_{t_0}(t)$
with initial value $x_{t_0}=x_{t_0}(t_0):=x_0$ for $t\in [0,t_1]$. Obviously, (i) holds
by construction, and the continuity and differentiability of $x_{t_0}(t)$ at other times
ensures (ii) and (iii).

For the induction step $j \to j+1$, we assume 
that we have constructed $x_a(t)$ already for $0 \leq t \leq t_j$. For $t\in [t_j,t_{j+1}]$, we 
define $x_a(t):=x_{t_{j+1}}(t)$ with initial value $x_{t_{j+1}}=x_{t_{j+1}}:=x_a(t_j)=x_{t_j}(t_j)$.
Continuity of $x_a(t)$ at $t=t_j$ follows by construction, and the continuity and differentiability 
of $x_{t_{j+1}}(t)$ again ensures~(ii) and~(iii).
\end{proof}

Given two mode-switch signals~$a$, $b$, we define their distance as
\begin{equation}
d_T(a,b)
=
\lambda\big(\{ t\in[0,T] \mid a_t \neq b_t \}\big)
\end{equation}
where~$\lambda$ is the Lebesgue measure on~$\IR$.
The distance function~$d_T$ is a metric on the set of mode-switch signals.

The following \cref{thm:real-valued:is:cont} shows that the mapping 
$a\mapsto x_a$ is continuous.

\begin{thm}\label{thm:real-valued:is:cont}
Let~$K\geq 1$ be a common Lipschitz constant for all functions in~$F$ and let~$M$ be a real number such that $\lVert f(t,x(t))\rVert \leq M$ for all $f\in F$, all $x\in U$, and all $t \in [0,T]$.
Then, for all mode-switch signals~$a$ and~$b$,
if~$x_a$ is the output signal for~$a$
and~$x_b$ is the output signal for~$b$,
then
$\lVert x_a - x_b\rVert_{\infty} \leq 2Me^{TK} d_T(a,b)$. Consequently, the mapping $a\mapsto x_a$ is continuous.
\end{thm}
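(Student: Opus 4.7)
The plan is to derive the bound via a Grönwall-type argument applied to the difference $e(t) := x_a(t) - x_b(t)$. By Lemma~\ref{lem:uniqueness}, both $x_a$ and $x_b$ are well defined, continuous on $[0,T]$ with $e(0) = 0$, and each is piecewise $C^1$: differentiable at every $t$ outside the countable set $S_a \cup S_b$ of switching times of~$a$ and~$b$. Hence $e$ is absolutely continuous and $e(t) = \int_0^t e'(s)\,ds$ for every $t \in [0, T]$.

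Let $D = \{\, t \in [0,T] : a_t \neq b_t \,\}$, so that $\lambda(D) = d_T(a, b)$. For $t \in [0, T] \setminus (D \cup S_a \cup S_b)$ we have $a_t = b_t =: f \in F$, and the common Lipschitz bound gives
\begin{equation*}
\lVert e'(t) \rVert = \lVert f(t, x_a(t)) - f(t, x_b(t)) \rVert \leq K \, \lVert e(t) \rVert .
\end{equation*}
For $t \in D \setminus (S_a \cup S_b)$, the two vector fields $a_t$ and $b_t$ differ, but each is bounded in norm by $M$, so $\lVert e'(t) \rVert \leq 2M$. Integrating and splitting the integral over these two disjoint regions yields
\begin{equation*}
\lVert e(t) \rVert \leq K \int_0^t \lVert e(s) \rVert \, ds + 2M \, d_T(a, b) .
\end{equation*}
Grönwall's inequality then gives $\lVert e(t) \rVert \leq 2M \, d_T(a, b) \, e^{Kt}$, and taking the supremum over $t \in [0, T]$ produces exactly the claimed estimate. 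Continuity of the map $a \mapsto x_a$, from mode-switch signals under $d_T$ to output signals under $\lVert \cdot \rVert_\infty$, is immediate from this Lipschitz bound with $\delta = \varepsilon / (2 M e^{TK})$.

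The main obstacle is the technical care needed to justify the integration step, since $e$ is not differentiable on the (possibly infinite) countable set $S_a \cup S_b$. A clean way to handle this is to observe that on each maximal sub-interval on which both $a$ and $b$ are constant, $e$ is actually $C^1$ with the derivative estimates above; absolute continuity on $[0, T]$ then follows by pasting these pieces together, so the Lebesgue form of the fundamental theorem of calculus applies. As a more elementary alternative, one can bypass Grönwall by taking the common refinement $t_0 = 0 < t_1 < t_2 < \dots$ of the switching times of $a$ and $b$, and proceeding by induction: on intervals where $a \equiv b$, Lemma~\ref{lem:cont:wrt:initial:value} gives $\lVert x_a(t_{k+1}) - x_b(t_{k+1}) \rVert \leq e^{K(t_{k+1}-t_k)} \lVert x_a(t_k) - x_b(t_k) \rVert$; on intervals where the signals differ, the crude bound $\lVert x_a(t_{k+1}) - x_b(t_{k+1}) \rVert \leq \lVert x_a(t_k) - x_b(t_k) \rVert + 2M (t_{k+1} - t_k)$ applies. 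Composing these estimates and, if necessary, passing to the limit in the case of infinitely many switches recovers the same bound.
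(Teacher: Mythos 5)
Your proof is correct, and the Gr\"onwall route is genuinely different from the paper's argument. The paper proceeds by a discrete induction over the (finite) common set of switching times $0 = s_0 < s_1 < \cdots < s_m = T$, establishing the pointwise invariant $\lVert x_a(t) - x_b(t)\rVert \le 2Me^{tK}d_t(a,b)$: on subintervals where $a$ and $b$ agree it invokes Lemma~\ref{lem:cont:wrt:initial:value} (continuous dependence on initial values), and on subintervals where they disagree it uses the mean-value theorem with the bound $M$ together with the identity $d_t(a,b) = d_{s_k}(a,b) + (t-s_k)$. Your integral-inequality formulation packages both cases into a single almost-everywhere derivative estimate ($K\lVert e\rVert$ off the disagreement set $D$, $2M$ on $D$) and lets Gr\"onwall do the bookkeeping; your second, ``more elementary'' sketch is essentially the paper's proof. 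Two small remarks. First, the worry about a countably infinite switching set is moot: a right-continuous step function with left limits taking values in a finite set on a compact interval has only finitely many discontinuities (the paper notes this explicitly), so $S_a \cup S_b$ is finite and the absolute continuity of $e$ is immediate from the uniform bound $\lVert e'\rVert \le 2M$ off that finite set. Second, your Gr\"onwall version as written yields the bound with the global quantity $d_T(a,b)$, which suffices for the stated theorem, whereas the paper's induction retains the slightly sharper time-local quantity $d_t(a,b)$; this difference is inessential here.
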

\begin{proof}
Let $S = \{ t\in (0,T) \mid \text{$a$ or $b$ is discontinuous at $t$} \} \cup \{0,T\}$ be the set of switching times of~$a$ and~$b$.
The set~$S$ must be finite, since both~$a$ and~$b$ are right-continuous on a compact interval.
Let $0=s_0 < s_1 < s_2 < \cdots < s_m=T$ be the increasing enumeration of~$S$.

We show by induction on~$k$ that
\begin{equation}\label{eq:thm:real-valued:is:cont:induction}
\forall t\in [0, s_k] \colon\quad
\lVert x_a(t) - x_b(t) \rVert \leq 2Me^{t K} d_t(a,b)
\end{equation}
for all $k\in\{0,1,2,\dots,m\}$.
The base case $k=0$ is trivial.
For the induction step $k \mapsto k+1$, we distinguish the two cases
$a_{s_k} = b_{s_k}$ and
$a_{s_k} \neq b_{s_k}$.

If $a_{s_k} = b_{s_k}$, then we have $a_t = b_t$ for all $t\in [s_k,s_{k+1})$ and
hence $d_t(a,b) = d_{s_k}(a,b)$ for all $t\in[s_k,s_{k+1}]$.
Moreover, we can apply Lemma~\ref{lem:cont:wrt:initial:value} and obtain
\begin{equation}
\forall t\in [s_k, s_{k+1}]\colon\quad
\lVert x_a(t) - x_b(t) \rVert
\leq
e^{(t - s_k)K} \lVert x_a(s_k) - x_b(s_k) \rVert
\enspace.
\end{equation}
Plugging in~\eqref{eq:thm:real-valued:is:cont:induction} for $t = s_k$ reveals that~\eqref{eq:thm:real-valued:is:cont:induction} holds for all $t\in[s_k, s_{k+1}]$ as well.

If $a_{s_k} \neq b_{s_k}$, then~$x_a$ and~$x_b$ follow different differential equations in the interval $t \in [s_k,s_{k+1}]$.
We can, however, use the mean-value theorem for vector-valued functions~\cite[Theorem~5.19]{rudin1976principles} to obtain
\begin{equation}
\forall t\in [s_k, s_{k+1}]\colon\quad
\lVert x_a(t) - x_a(s_k) \rVert
\leq
M(t - s_k)\enspace\text{and}
\end{equation}
\begin{equation}
\forall t\in [s_k, s_{k+1}]\colon\quad
\lVert x_b(t) - x_b(s_k) \rVert
\leq
M(t - s_k).
\end{equation}
This, combined with the induction hypothesis, the equality $d_t(a,b) = d_{s_k}(a,b) + (t-s_k)$, and the inequalities $1\leq e^{t K}$ and $e^{s_k K} \leq e^{t K}$, implies
\begin{equation*}
\begin{split}
\lVert x_a(t) - x_b(t) \rVert
& \leq
\lVert x_a(t) - x_a(s_k) \rVert\\
& \quad + \lVert x_a(s_k) - x_b(s_k) \rVert
+
\lVert x_b(s_k) - x_b(t) \rVert
\\ & \leq
2M(t-s_k)
+
2Me^{s_k K} d_{s_k}(a,b)
\\ & \leq
2M e^{t K} (t-s_k)
+
2Me^{t K} d_{s_k}(a,b)
\\ & =
2M e^{t K} \big( d_t(a,b) - d_{s_k}(a,b) \big)
+
2Me^{t K} d_{s_k}(a,b)
\\ & =
2M e^{t K} d_t(a,b)
\end{split}
\end{equation*}
for all $t\in [s_k, s_{k+1}]$.
This concludes the proof.
\end{proof}

We conclude this section with the remark that the (proof of the)
continuity property
of \cref{thm:real-valued:is:cont} is very different from the standard
(proof of the) continuity property of controlled variables in
closed thresholded hybrid systems. Mode switches in such systems
are caused by the time evolution of the system
itself, e.g., when some controlled variable exceeds some value. Consequently,
such systems can be described by means of a \emph{single} ODE system with
discontinuous righthand side \cite{Fil88}.

By contrast, in our hybrid systems, the mode switches are solely caused by changes
of digital inputs that are \emph{externally} controlled: For every possible pattern of
the digital inputs, there is a dedicated ODE system that controls the 
analog output. Consequently, the time evolution of the output now also depends 
on the time evolution of the inputs. Proving the
continuity of the (discretized) output w.r.t.\ different 
(but close, w.r.t. some metric) digital input signals requires
relating the output of \emph{different} ODE systems.

\subsection{Continuity of thresholding}
\label{sec:thresholding}

For a real number $\xi\in\IR$ and a function $x:[a,b]\to\IR$, denote by $\Theta_\xi(x)$ the thresholded version of $x$ defined by
\begin{equation}
\Theta_\xi(x) : [a,b] \to \{0,1\},
\quad
\Theta_\xi(x)(t) =
\begin{cases}
0 & \text{if } x(t) \leq \xi,\\
1 & \text{if } x(t) > \xi.
\end{cases}
\end{equation}

\begin{lem}\label{lem:thr:cont:monotonic:end}
Let $\xi\in\IR$ and let $x:[a,b]\to\IR$ be a continuous strictly monotonic function with $x(b)=\xi$.
Then, for every $\varepsilon>0$, there exists a $\delta > 0$ such that,
for every continuous function $y:[a,b]\to\IR$, the condition
$\lVert x - y\rVert_\infty < \delta$ implies $\lVert \Theta_\xi(x) - \Theta_\xi(y)\rVert_1 < \varepsilon$.
\end{lem}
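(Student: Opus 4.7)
The plan is to reduce to the strictly increasing case (the strictly decreasing case is a mirror image, handled at the end). Assuming $x$ strictly increasing on $[a,b]$ with $x(b)=\xi$, one has $x(t) < \xi$ for every $t\in[a,b)$ while $x(b)=\xi$, so $\Theta_\xi(x) \equiv 0$ on $[a,b]$. The $1$-norm of the difference therefore simplifies to
\begin{equation}
\lVert \Theta_\xi(x) - \Theta_\xi(y) \rVert_1
= \lambda\bigl(\{\,t\in[a,b] : y(t) > \xi\,\}\bigr),
\end{equation}
and the task reduces to showing that, for $y$ uniformly close to $x$, the set where $y$ exceeds the threshold has small Lebesgue measure.

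The key pointwise observation is that $\lVert y-x\rVert_\infty < \delta$ forces the inclusion $\{\,t : y(t) > \xi\,\} \subseteq \{\,t : x(t) > \xi - \delta\,\}$. Since $x$ is continuous and strictly increasing with $x(b)=\xi$, this latter set is an interval of the form $(t_\delta, b]$, where $t_\delta$ is the unique preimage $x^{-1}(\xi-\delta)$ as long as $\xi-\delta > x(a)$ (and equals $a$ otherwise). By continuity of $x$, $t_\delta \to b$ as $\delta \to 0$, so its length $b-t_\delta$ can be driven arbitrarily small by shrinking $\delta$.

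Quantitatively, fix $\varepsilon > 0$. If $\varepsilon \geq b-a$, any positive $\delta$ works trivially, because the ambient interval already has length below $\varepsilon$. Otherwise, $b - \varepsilon/2 \in [a,b)$, so by strict monotonicity $x(b-\varepsilon/2) < \xi$; I would then set $\delta := \xi - x(b-\varepsilon/2) > 0$. For this choice, $\{\,t : x(t) > \xi - \delta\,\} = (b-\varepsilon/2,\, b]$, of measure $\varepsilon/2 < \varepsilon$, giving the desired bound. The strictly decreasing case is completely symmetric: now $\Theta_\xi(x) \equiv 1$ on $[a,b)$, one instead considers the inclusion $\{\,t : y(t) \leq \xi\,\} \subseteq \{\,t : x(t) \leq \xi + \delta\,\}$, and chooses $\delta := x(b-\varepsilon/2) - \xi$.

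I do not anticipate a substantive obstacle; the argument is a one-sided continuity estimate exploiting that $x$ meets the threshold exactly at the endpoint $b$ and nowhere else, which localizes the region of potential disagreement to a short interval at the right end of $[a,b]$. The only points deserving care are the boundary case $\varepsilon \geq b-a$ and maintaining strict (as opposed to non-strict) inequality in the final bound, which is why I halve $\varepsilon$ when calibrating $\delta$.
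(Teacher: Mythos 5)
Your proof is correct and follows essentially the same route as the paper's: both reduce to $\Theta_\xi(x)\equiv 0$ and bound $\lVert\Theta_\xi(x)-\Theta_\xi(y)\rVert_1$ by the length of the terminal interval on which $x$ exceeds $\xi-\delta$. The only cosmetic difference is that you calibrate $\delta=\xi-x(b-\varepsilon/2)$ explicitly from strict monotonicity, whereas the paper invokes continuity of the inverse $x^{-1}$ and lets $\delta\to 0$; note also that the borderline case $\varepsilon=b-a$ is already covered by your main construction, so the slight slack in your ``trivial'' branch is harmless.
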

\begin{proof}
We show the lemma for the case that~$x$ is strictly increasing.
The proof for strictly decreasing~$x$ is analogous.

Set $\chi = x(a)$.
Since~$x$ is bijective onto the interval $[\chi,\xi]$, it has an inverse function $x^{-1}:[\chi,\xi]\to [a,b]$.
The inverse function~$x^{-1}$ is continuous because the domain $[a,b]$ is compact \cite[Theorem~4.17]{rudin1976principles}.

The relation $t \leq x^{-1}(\xi - \delta)$ implies $x(t)+\delta \leq \xi$.
Hence, if $\lVert x - y\rVert_\infty < \delta$, then $y(t) \leq x(t) + \delta \leq \xi$ for all $t\leq x^{-1}(\xi-\delta)$.
This means that $\Theta_\xi(y)(t) = 0$ for all $t\leq x^{-1}(\xi-\delta)$,
so $t>x^{-1}(\xi-\delta)$ for every $t\in[a,b]$ where $\Theta_\xi(y)(t) = 1$.

By assumption, we have $\Theta_\xi(x)(t) = 0$ for all $t\in [a,b]$.
Thus,
\begin{equation}
\begin{split}
\lVert \Theta_\xi(x) - \Theta_\xi(y)\rVert_1
& =
\lambda\big( \{ t\in[0,T] \mid \Theta_\xi(y) = 1 \} \big)
\\ & =
\lambda\big( \{ t\in[0,T] \mid y(t) > \xi \} \big) \label{eq:measure1}
\\ & \leq
b - x^{-1}(\xi-\delta).
\end{split}
\end{equation}
Note that continuity of $y$ is sufficient to ensure that the set in \cref{eq:measure1} is measurable.
Since~$x^{-1}$ is continuous, we have $x^{-1}(\xi-\delta) \to x^{-1}(\xi) = b$ as $\delta \to 0$.
In particular, for every $\varepsilon > 0$, there exists a $\delta > 0$ such that $b - x^{-1}(\xi-\delta) < \varepsilon$.
This concludes the proof.
\end{proof}

The following \cref{lem:thr:cont:monotonic} shows that we can drop the assumption $x(b)=\xi$
in \cref{lem:thr:cont:monotonic:end}:

\begin{lem}\label{lem:thr:cont:monotonic}
Let $\xi\in\IR$ and let $x,y:[a,b]\to\IR$ be two continuous functions where $x$ is strictly monotonic. Then, for every $\varepsilon>0$, there exists a $\delta > 0$ such that $\lVert x - y\rVert_\infty < \delta$ implies $\lVert \Theta_\xi(x) - \Theta_\xi(y)\rVert_1 < \varepsilon$.
\end{lem}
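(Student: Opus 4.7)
The plan is to reduce this lemma to Lemma~\ref{lem:thr:cont:monotonic:end} by a case split on whether the strictly monotonic $x$ crosses the threshold $\xi$ on $[a,b]$. Since $x$ is continuous on a compact interval, its image is the closed interval with endpoints $x(a)$ and $x(b)$, which makes the case split clean.

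\textbf{No-crossing case.} If $\xi$ is not in this image, compactness yields $\eta = \min_{t\in[a,b]} \lvert x(t) - \xi \rvert > 0$. Choosing $\delta = \eta$, the condition $\lVert x - y \rVert_\infty < \delta$ forces $y(t)$ to stay strictly on the same side of $\xi$ as $x(t)$ for every $t$, so $\Theta_\xi(x) = \Theta_\xi(y)$ pointwise and the $L^1$ distance is zero, trivially below any $\varepsilon > 0$.

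\textbf{Crossing case.} If $\xi$ lies in the image, strict monotonicity plus the intermediate value theorem locates a unique $c \in [a,b]$ with $x(c) = \xi$, and I split $[a,b] = [a,c] \cup [c,b]$ (the degenerate subcases $c=a$ and $c=b$ reduce to a single subinterval). On $[a,c]$, the restriction of $x$ is continuous, strictly monotonic, and satisfies $x(c)=\xi$ at the right endpoint, so Lemma~\ref{lem:thr:cont:monotonic:end} applies directly and provides some $\delta_1 > 0$ yielding an $L^1$ bound of $\varepsilon/2$ on $[a,c]$. For $[c,b]$, I apply the time reversal $\tilde x(s) = x(c+b-s)$, $\tilde y(s) = y(c+b-s)$; then $\tilde x$ is continuous, strictly monotonic in the opposite direction, and satisfies $\tilde x(b) = \xi$, so Lemma~\ref{lem:thr:cont:monotonic:end} applies again to produce $\delta_2 > 0$ with the analogous bound $\varepsilon/2$ for $\tilde y$. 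Because the substitution $t=c+b-s$ has Jacobian $-1$ and is an isometry of $[c,b]$, it preserves both the supremum norm and the $L^1$ norm of the thresholded difference, so this bound transports back to $[c,b]$. Taking $\delta = \min(\delta_1,\delta_2)$ and summing across the split (the single point $c$ is measure-zero and hence negligible) yields the claimed $\varepsilon$ bound.

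The argument is mostly bookkeeping and contains no genuinely new analytic content beyond Lemma~\ref{lem:thr:cont:monotonic:end}; the only place I would take care is the time-reversal step on $[c,b]$, which I keep honest by explicitly noting that $\lVert \tilde x - \tilde y \rVert_\infty = \lVert x\rvert_{[c,b]} - y\rvert_{[c,b]} \rVert_\infty \leq \lVert x - y \rVert_\infty$ and that Lebesgue measure is invariant under the reflection, so both hypotheses and conclusion of Lemma~\ref{lem:thr:cont:monotonic:end} transport cleanly.
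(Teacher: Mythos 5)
Your proof is correct and follows essentially the same route as the paper's: a case split on whether $x$ crosses $\xi$, with the trivial no-crossing cases handled by a uniform margin and the crossing case split at the unique $c$ with $x(c)=\xi$, applying Lemma~\ref{lem:thr:cont:monotonic:end} directly on $[a,c]$ and via time reversal on $[c,b]$. The only cosmetic difference is that you merge the paper's two no-crossing cases into one via $\min_t|x(t)-\xi|>0$, which is a harmless simplification.
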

\begin{proof}
We again show the lemma for the case that~$x$ is strictly increasing.
The proof for strictly decreasing~$x$ is analogous.

Let $\varepsilon > 0$. We distinguish three cases:

(i) If $x(b) < \xi$, then we have $\Theta_\xi(x)(t)=0$ for all $t\in[a,b]$.
Choosing $\delta = \xi - x(b)$, we deduce $y(t) < x(t) + \delta \leq x(b) + \xi - x(b) = \xi$ for all $t\in[a,b]$ whenever $\lVert x - y\rVert_\infty < \delta$.
Hence, we get $\Theta_\xi(y)(t)=0$ for all $t\in[a,b]$ and thus
$\lVert \Theta_\xi(x) - \Theta_\xi(y)\rVert_1 = 0 < \varepsilon$.

(ii) If $x(a) > \xi$, then we can choose $\delta = x(a) - \xi$ and get $\Theta_\xi(y)(t) = \Theta_\xi(x)(t) = 1$ for all $t\in[a,b]$ whenever $\lVert x-y \rVert_\infty < \delta$.
In particular, $\lVert \Theta_\xi(x) - \Theta_\xi(y)\rVert_1 = 0 < \varepsilon$.

(iii) If $x(a) \leq \xi \leq x(b)$, then there exists a unique $c\in[a,b]$ with $x(c) = \xi$.
Applying Lemma~\ref{lem:thr:cont:monotonic:end} on the restriction of~$x$ on the interval $[a,c]$, we get the existence of a $\delta_1>0$ such that $\lVert x- y\rVert_{[a,c],\infty} < \delta_1$ implies $\lVert \Theta_\xi(x) - \Theta_\xi(y)\rVert_{[a,c],1} < \varepsilon/2$; herein, $\lVert \cdot \rVert_{[a,c],\infty}$ and $\lVert \cdot \rVert_{[a,c],1}$ denote the supremum-norm and the $1$-norm on the interval~$[a,c]$, respectively.
Applying Lemma~\ref{lem:thr:cont:monotonic:end} on the restriction of~$x$ on the interval~$[c,b]$ after the coordinate transformation $t\mapsto -t$ yields the existence of a $\delta_2>0$ such that $\lVert x- y\rVert_{[c,b],\infty} < \delta_2$ implies $\lVert \Theta_\xi(x) - \Theta_\xi(y)\rVert_{[c,b],1} < \varepsilon/2$.
Setting $\delta = \min\{\delta_1,\delta_2\}$, we thus get
$\lVert \Theta_\xi(x) - \Theta_\xi(y)\rVert_{[a,b],1}
 =
\lVert \Theta_\xi(x) - \Theta_\xi(y)\rVert_{[a,c],1}
+
\lVert \Theta_\xi(x) - \Theta_\xi(y)\rVert_{[c,b],1}
 <
\varepsilon/2 + \varepsilon/2
=
\varepsilon
$
whenever $\lVert x - y\rVert_{[a,b],\infty} < \delta$.
\end{proof}

The following \cref{thm:thr:cont} shows that the mapping $x\mapsto \Theta_\xi(x)$ 
is continuous for a given function $x$, provided that $x$ has only finitely many local optima, i.e.,
points where $x'(t)=0$:

\begin{thm}\label{thm:thr:cont}
Let $\xi\in\IR$ and let $x,y:[0,T]\to\IR$ be two differentiable functions.
Assume that~$x$ has only finitely many local optima.
Then, for every $\varepsilon>0$, there exists a $\delta > 0$ such that $\lVert x - y\rVert_\infty < \delta$ implies $\lVert \Theta_\xi(x) - \Theta_\xi(y)\rVert_1 < \varepsilon$. Consequently, the mapping $x\mapsto \Theta_\xi(x)$ is continuous.
\end{thm}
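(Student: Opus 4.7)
The plan is to reduce to the strictly monotonic case already handled by \cref{lem:thr:cont:monotonic}. Since $x$ has only finitely many local optima in $(0,T)$, I list them together with the endpoints as $0=t_0 < t_1 < \cdots < t_n = T$, giving a partition of $[0,T]$ into $n$ closed subintervals $[t_i,t_{i+1}]$.

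The first step is to show that $x$ is strictly monotonic on each subinterval $[t_i,t_{i+1}]$. By construction, $x$ has no local optimum in the open interval $(t_i, t_{i+1})$. If $x$ were not strictly monotonic on $[t_i,t_{i+1}]$, then by continuity there would exist $t_i < u < v < t_{i+1}$ with $x(u)=x(v)$; the extreme value theorem applied to $[u,v]$ then forces either an interior extremum in $(u,v)$, or that $x$ is constant on $[u,v]$. Both possibilities produce a local optimum of $x$ inside $(t_i,t_{i+1})$, contradicting our choice of partition.

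The second step is to apply \cref{lem:thr:cont:monotonic} to the restrictions of $x$ and $y$ on each $[t_i,t_{i+1}]$ with tolerance $\varepsilon/n$, obtaining $\delta_i>0$ such that the local sup-norm bound $\lVert x-y\rVert_{[t_i,t_{i+1}],\infty} < \delta_i$ implies the local $L^1$ bound $\lVert\Theta_\xi(x)-\Theta_\xi(y)\rVert_{[t_i,t_{i+1}],1} < \varepsilon/n$. Setting $\delta=\min_{i} \delta_i > 0$, the global assumption $\lVert x-y\rVert_\infty<\delta$ enforces the local sup-norm bound on every subinterval simultaneously, and summing the $n$ local $L^1$ bounds yields $\lVert\Theta_\xi(x)-\Theta_\xi(y)\rVert_1 < \varepsilon$. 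The asserted continuity of the mapping $x\mapsto\Theta_\xi(x)$ is then immediate from this $\varepsilon$--$\delta$ statement.

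I expect the only real obstacle to be the strict monotonicity argument on each piece of the partition. Differentiability alone permits pathological behaviour such as oscillation around a zero of the derivative, but any such piece would contribute infinitely many local optima and is therefore excluded by the hypothesis on $x$; this reduces the step to a short contrapositive via the extreme value theorem, as sketched above.
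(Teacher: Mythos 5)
Your proposal is correct and follows essentially the same route as the paper: partition $[0,T]$ at the finitely many local optima, argue strict monotonicity of $x$ on each subinterval, apply \cref{lem:thr:cont:monotonic} with tolerance $\varepsilon/n$ on each piece, and take the minimum of the resulting $\delta_i$. The only (immaterial) difference is that the paper justifies the piecewise strict monotonicity by the mean-value theorem, reading ``local optimum'' as a critical point $x'(t)=0$, whereas you use an extreme-value-theorem argument under the usual reading of ``local optimum''; both are valid.
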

\begin{proof}
Let $\mathcal{N} = \left\{t\in[0,T] \mid \text{$x$ has a local optimum at $t$}\right\}\cup\{0,T\}$, which is
finite by assumption, and $t_0 < t_1 < t_2 < \cdots < t_m$ be the increasing enumeration of~$\mathcal{N}$.
By the mean-value theorem, the function~$x$ is strictly monotonic in every interval $[t_k,t_{k+1}]$ for $k\in\{0,1,2,\dots,m-1\}$.

Let $\varepsilon>0$.
Applying Lemma~\ref{lem:thr:cont:monotonic} to the restriction of~$x$ on each of the intervals $[t_k,t_{k+1}]$, we get the existence of $\delta_k>0$ such that $\lVert x - y\rVert_{[t_k,t_{k+1}],\infty} < \delta_k$ implies
$\lVert \Theta_\xi(x) - \Theta_\xi(y)\rVert_{[t_k,t_{k+1}],1} < \varepsilon/m$ for each $k\in\{0,1,2,\dots,m-1\}$.
Setting $\delta = \min\{\delta_0, \delta_1, \delta_2, \dots, \delta_{m-1}\}$, we thus obtain
\begin{equation}
\begin{split}
\lVert \Theta_\xi(x) - \Theta_\xi(y)\rVert_{[0,T],1}
& =
\sum_{k=0}^{m-1}
\lVert \Theta_\xi(x) - \Theta_\xi(y)\rVert_{[t_k,t_{k+1}],1}
\\ & <
\sum_{k=0}^{m-1}
\varepsilon/m
=
\varepsilon
\end{split}
\end{equation}
whenever $\lVert x - y\rVert_{[0,T],\infty} < \delta$.
\end{proof}

\section{Continuity of Digitized Hybrid Gate}
\label{sec:gatemodels} 

To prepare for our general result about the continuity of hybrid gate models, we
will first
(re)prove the continuity of IDM channels as shown in \cref{fig:analogy}, which
has been established by a quite tedious direct proof in \cite{FNNS19:TCAD}. 
In our notation, an IDM channel consists of:
\begin{itemize}
\item A nonnegative minimum delay $\dmin\geq0$ and a delay function $\Delta_{\dmin}(s)$ that maps the binary input signal 
$i_a$, augmented with the left-sided limit $i_a(0-)$ as the \emph{initial value}\footnote{In \cite{FNNS19:TCAD}, this initial
value of a signal was encoded by extending the time domain to the whole $\IR$ and using $i_a(-\infty)$.} that can be different from $i_a(0)$, to the binary signal $i_d=\Delta_{\dmin}(i_a)$, defined by
\begin{equation}
\Delta_{\dmin}(i_a)(t) =
\begin{cases}
i_a(0-) & \text{if } t< \dmin\\
i_a(t-\dmin) & \text{if } t\geq \dmin
\enspace.
\end{cases}
\end{equation}
\item An open set $U\subseteq \IR^n$, where $\pi_1[U]$ represents the analog output signal and $\pi_k[U]$, $k=\{2,3, \ldots, n \}$, specifies the internal state variables of the model. In this fashion,\footnote{In real circuits, the
interval $(0,1)$ typically needs to be replaced by $(0,\vdd)$.} we presume that $\pi_1[U] = (0,1)$, i.e., the range of output signals is contained in the interval $(0,1)$.
\item Two bounded functions $f_\uparrow, f_\downarrow :\IR \times U \to \IR^n$ 
 with the following properties:
    \begin{itemize}
        \item $f_\uparrow, f_\downarrow$ are continuous for $(t,x) \in [0,T] \times U$, 
         for any $0 < T < \infty$, and Lipschitz continuous in $U$, which entails
that every trajectory~$x$ of the ODEs $\frac{d}{dt}\,x(t) = f_\uparrow(t,x(t))$ and $\frac{d}{dt}\,x(t) = f_\downarrow(t,x(t))$ with any initial value $x(0)\in U$ satisfies $x(t)\in U$ for all~$t\in [0,T]$, recall \cref{sec:contODE}.
        \item for no trajectory~$x$ of the ODEs $\frac{d}{dt}\,x(t) = f_\uparrow(t,x(t))$ and $\frac{d}{dt}\,x(t) = f_\downarrow(t,x(t))$ with initial value $x(0)\in U$ does $\pi_1\circ x$ have infinitely many local optima, i.e., critical points with $(\pi_1\circ x)'(t)=0$.
    \end{itemize}
\item An initial value $x_0\in U$, with $x_0=f_\uparrow$ if $i_a(0-) = 1$ and $x_0 = f_\downarrow$ if $i_a(0-)=0$.
\item A mode-switch signal $a:[0,T] \to\{f_\uparrow,f_\downarrow\}$ defined by setting $a(t) = f_\uparrow$ if $i_d(t) = 1$ and $a(t) = f_\downarrow$ if $i_d(t)=0$.
\item The analog output signal $o_a=x_{a}$, i.e., the output signal for~$a$ and initial value~$x_0$.
\item A threshold voltage $\xi=\vth \in(0,1)$ for the comparator that finally produces the binary output
signal $o_d=\Theta_\xi(o_a)$.
\end{itemize}

By combining the results from \cref{sec:contODE} and \ref{sec:thresholding}, we obtain:

\begin{thm}\label{thm:channels:are:cont}
The channel function of an IDM channel, which maps from the input signal $i_a$ to the 
output signal $o_d$, is continuous with respect to the $1$-norm on the interval~$[0,T]$.
\end{thm}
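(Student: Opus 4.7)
I would factor the channel function as the composition of three maps and verify continuity of each in an appropriate metric: the pure delay $i_a \mapsto i_d = \Delta_{\dmin}(i_a)$ (from $L^1$ to $L^1$), the mode-switch-to-analog map $i_d \mapsto o_a = \pi_1 \circ x_a$ (from binary signals with the $L^1$ metric into $C([0,T],\IR)$ with the sup-norm), and finally the thresholding $o_a \mapsto o_d = \Theta_\xi(o_a)$ (from $C([0,T],\IR)$ back to $L^1$). Since compositions of continuous maps are continuous, the claim then follows.

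For the delay stage, assuming the standard convention that the two candidate inputs $i_a, i_a'$ share the same initial value $i_a(0-)$, a direct shift calculation gives
\[
\lVert \Delta_{\dmin}(i_a) - \Delta_{\dmin}(i_a')\rVert_1 \;=\; \int_{\dmin}^{T} \lvert i_a(t-\dmin) - i_a'(t-\dmin)\rvert\, dt \;\leq\; \lVert i_a - i_a'\rVert_1,
\]
so the delay is $1$-Lipschitz in the $1$-norm. For the ODE stage, observe that binary signals $i_d, i_d'$ induce mode-switch signals $a, a'$ that disagree exactly on $\{t : i_d(t) \neq i_d'(t)\}$, whence $d_T(a,a') = \lVert i_d - i_d' \rVert_1$. \cref{thm:real-valued:is:cont} then yields $\lVert x_a - x_{a'}\rVert_\infty \leq 2Me^{TK}\lVert i_d - i_d'\rVert_1$, and since $\pi_1$ is $1$-Lipschitz the same estimate passes to $\lVert o_a - o_a'\rVert_\infty$.

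For the thresholding stage, the IDM channel definition guarantees that $\pi_1 \circ x_a$ has only finitely many local optima, is continuous on $[0,T]$, and is differentiable off the finite set of switching times of $a$. \cref{thm:thr:cont} (or, strictly, the variant obtained by inserting the switching times into the finite set $\mathcal{N}$ and applying \cref{lem:thr:cont:monotonic} on each resulting strictly monotone piece) then produces, for every $\varepsilon > 0$, a $\delta > 0$ such that $\lVert o_a - o_a' \rVert_\infty < \delta$ implies $\lVert o_d - o_d'\rVert_1 < \varepsilon$. Chaining these three estimates yields the desired $(\varepsilon,\delta)$-statement for $i_a \mapsto o_d$ in the $1$-norm, by selecting, given $\varepsilon$, the $\delta$ from the thresholding step and then dividing by $2Me^{TK}$.

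\textbf{Main obstacle.} The genuine technical content is the interface between the metrics: the ODE stage turns $L^1$-small mode-switch perturbations into sup-norm-small analog perturbations, while the thresholding stage turns sup-norm-small analog perturbations into $L^1$-small digital outputs, so both conversions must be oriented correctly. The only further subtlety is that \cref{thm:thr:cont} as stated assumes $x$ to be differentiable on all of $[0,T]$, whereas $\pi_1 \circ x_a$ may have kinks at the mode-switch times of $a$; this is a purely bookkeeping extension (enlarging $\mathcal{N}$ and re-invoking \cref{lem:thr:cont:monotonic} piecewise) that does not affect the core argument.
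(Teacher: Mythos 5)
Your proposal is correct and follows essentially the same route as the paper: the same factorization into delay, mode-switch/ODE, and thresholding stages, with \cref{thm:real-valued:is:cont} handling the $d_T$-to-sup-norm step and \cref{thm:thr:cont} handling the sup-norm-to-$1$-norm step. Your remark about extending \cref{thm:thr:cont} to the piecewise-differentiable output $\pi_1\circ x_a$ by enlarging $\mathcal{N}$ with the switching times is a legitimate bookkeeping point that the paper's proof passes over silently, and your explicit Lipschitz constants merely make quantitative what the paper states qualitatively.
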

\begin{proof}
The mapping from $i_a$ to $o_d$ is continuous as the concatenation of continuous mappings:
\begin{itemize}
\item The mapping from $i_a \mapsto i_d$ is continuous since $\Delta_{\dmin}$ is trivially
continuous for input and output binary signals with the $1$-norm.
\item The mapping $i_d\mapsto a$ is a continuous mapping from the set of signals equipped with the $1$-norm to the set of mode-switch signals equipped with the metric~$d_T$, since the points of discontinuity of $a$ are the points where $i_d$ is discontinuous.
\item By Theorem~\ref{thm:real-valued:is:cont}, the mapping $a\mapsto x_{a}$ is a continuous mapping from the set of mode-switch signals equipped with the metric~$d_T$ to the set of piecewise differentiable functions $[0,T]\to U$ equipped with the supremum-norm.
\item The mapping $x_a\mapsto \pi_1\circ x_{a}$ is a continuous mapping from the set of piecewise differentiable functions $[0,T]\to U$ equipped with the supremum-norm to the set of piecewise differentiable functions $[0,T]\to (0,1)$ equipped with the supremum-norm.  Since $\lVert (x_1,\dots,x_n) \rVert_1 = 
  \lVert x_1 \rVert_1 + \dots +  \lVert x_n \rVert_1$ for every $x \in U$,
  this follows from $\lVert  \pi_1(x)\rVert_1 \leq \lVert x\rVert_1$.
\item By Theorem~\ref{thm:thr:cont}, the mapping
$\pi_1\circ x_{a} \mapsto \Theta_\xi(\pi_1\circ x_{a})$ is a continuous mapping from the set of piecewise differentiable functions $[0,T]\to (0,1)$ equipped with the supremum-norm to the set of binary signals equipped with the $1$-norm.\qedhere
\end{itemize}
\end{proof}

General digitized hybrid gates have $c\geq 1$ binary input signals $i_a=(i_a^1,\dots,i_a^c)$, augmented with
\emph{initial values} $(i_a^1(0-),\dots,i_a^c(0-))$, and a single binary output signal $o_d$, and are specified as follows:

\begin{definition}[Digitized hybrid gate]\label{def:dhm}
A digitized hybrid gate with $c$ inputs consists of:
\begin{itemize}
\item $c$ delay functions $\Delta_{\delta_j}(s)$ with $\delta_j\geq 0$, $1 \leq j \leq c$, 
that map the binary input signal
$i_a^j$ with initial value $i_a^j(0-)$ to the binary signal $i_d^j=\Delta_{\delta_j}(i_a^j)$, defined by 
\begin{equation}
\Delta_{\delta_j}(i_a^j)(t)=
\begin{cases}
i_a^j(0-) & \text{if } t< \delta_j\\
i_a^j(t-\delta_j) & \text{if } t\geq \delta_j \label{eq:idj}
\enspace.
\end{cases}
\end{equation}
\item An open set $U\subseteq \IR^n$, where $\pi_1[U]$ represents the analog output signal and $\pi_k[U]$, $k=\{2,3, \ldots, n \}$, specifies the internal state variables of the model. 
\item A set $F$ of bounded functions $f^\ell :\IR \times U \to \IR^n$,
  with the following properties:
    \begin{itemize}
    \item $f^\ell$
     is continuous for $(t,x) \in [0,T] \times U$, 
          for any $0 < T < \infty$, and Lipschitz continuous in $U$, with
          a common Lipschitz constant, which entails
          that every trajectory~$x$ of the ODE $\frac{d}{dt}\,x(t) = f^\ell(t,x(t))$
          with any initial value $x(0)\in U$ satisfies $x(t)\in U$ for all~$t\in [0,T]$.
        \item for no trajectory~$x$ of the ODEs $\frac{d}{dt}\,x(t) = f^\ell(t,x(t))$ with initial value $x(0)\in U$ does $\pi_1\circ x$ have infinitely many local optima, i.e., critical points with $(\pi_1\circ x)'(t)=0$.
     \end{itemize}
\item A mode-switch signal $a:[0,T] \to F$, which obtained by a continuous
  choice function $a_c$ acting on $i_d^1(t),\dots,i_d^c(t)$, i.e., 
  $a(t) = a_c(i_d^1(t),\dots,i_d^c(t))$. 
\item An initial value $x_0\in U$, which must correspond to the mode selected by $a_c(i_a^1(0-),\dots,i_a^c(0-))$.
\item The analog output signal $o_a=x_{a}$, i.e., the output signal for~$a$ and initial value~$x_0$.
\item A threshold voltage $\xi=\vth \in(0,1)$ for the comparator that finally produces the binary output
signal $o_d=\Theta_\xi(o_a)$.
\end{itemize}
\end{definition}

By essentially the same proof as for \cref{thm:channels:are:cont}, we obtain:
\begin{thm}\label{thm:digitizedmodels:are:cont}
The gate function of a digitized hybrid gate with $c$ inputs, which maps from the vector of input signals $i_a=(i_a^1,\dots,i_a^c)$ to the output signal $o_d$, is continuous with respect to the $1$-norm on the interval~$[0,T]$.
\end{thm}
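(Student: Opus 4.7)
The plan is to mimic the decomposition used for \cref{thm:channels:are:cont}: write the map $i_a \mapsto o_d$ as a composition of continuous maps, so that the composite map is continuous. The chain will be
\begin{equation*}
(i_a^1,\dots,i_a^c)
\;\mapsto\; (i_d^1,\dots,i_d^c)
\;\mapsto\; a
\;\mapsto\; x_a
\;\mapsto\; \pi_1 \circ x_a
\;\mapsto\; \Theta_\xi(\pi_1 \circ x_a).
\end{equation*}
The first map is continuous componentwise: by \cref{eq:idj}, each $\Delta_{\delta_j}$ is just a time shift of a binary signal, which preserves the $1$-norm on $[0,T]$ up to boundary effects, and is trivially continuous. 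Equipping the product space of $c$ binary signals with the sum of the componentwise $1$-norms therefore makes $(i_a^1,\dots,i_a^c) \mapsto (i_d^1,\dots,i_d^c)$ continuous.

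The new ingredient, compared to the single-input case, is the second map. The key observation is that since $a(t) = a_c(i_d^1(t),\dots,i_d^c(t))$ with $a_c$ a fixed function of the $c$-tuple of binary values, whenever two tuples $(i_d^1,\dots,i_d^c)$ and $(i_d'^1,\dots,i_d'^c)$ agree at a point $t$ we must have $a(t) = a'(t)$. Hence
\begin{equation*}
\{t\in[0,T]\mid a(t)\neq a'(t)\}
\;\subseteq\;
\bigcup_{j=1}^c \{t\in[0,T]\mid i_d^j(t)\neq i_d'^j(t)\},
\end{equation*}
and, since the $i_d^j$ are $\{0,1\}$-valued, Lebesgue measure of the right-hand set equals $\sum_j \lVert i_d^j - i_d'^j \rVert_1$. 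Therefore $d_T(a,a') \leq \sum_{j=1}^c \lVert i_d^j - i_d'^j \rVert_1$, which yields the required continuity from the product $1$-norm to the metric $d_T$ on mode-switch signals. This is the step I expect to be the main conceptual obstacle, but the bound above resolves it cleanly and does not require any structural assumption on $a_c$ beyond being a function of the current input values.

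The remaining three maps are handled exactly as in the proof of \cref{thm:channels:are:cont}. The map $a \mapsto x_a$ from $(F^{[0,T]}, d_T)$ into the continuous functions $[0,T]\to U$ with the supremum norm is continuous by \cref{thm:real-valued:is:cont}, applied to the set $F$ with its common Lipschitz constant and uniform bound. The projection $\pi_1$ is continuous from the supremum norm on $U$-valued functions to the supremum norm on $(0,1)$-valued functions, since $\lVert \pi_1(x)\rVert_\infty \leq \lVert x\rVert_\infty$. Finally, the thresholding map $\pi_1\circ x_a \mapsto \Theta_\xi(\pi_1\circ x_a)$ from the supremum norm on $(0,1)$-valued functions to the $1$-norm on binary signals is continuous by \cref{thm:thr:cont}, using precisely the hypothesis in \cref{def:dhm} that, for every trajectory of every $f^\ell$, the projection $\pi_1 \circ x$ has only finitely many local optima. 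Composing these five continuous maps gives the desired continuity of the gate function with respect to the $1$-norm on $[0,T]$.
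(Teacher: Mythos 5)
Your proof is correct and follows essentially the same route as the paper, which simply asserts that the result holds ``by essentially the same proof as for \cref{thm:channels:are:cont}''; you have usefully made explicit the one genuinely new step, namely the bound $d_T(a,a') \leq \sum_{j=1}^c \lVert i_d^j - i_d'^j\rVert_1$ for the map from the delayed input tuple to the mode-switch signal. (Minor quibble: the Lebesgue measure of the union $\bigcup_j \{t \mid i_d^j(t)\neq i_d'^j(t)\}$ is at most, not equal to, the sum of the componentwise $1$-norms, but the inequality you need still holds.)
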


\section{Composing Gates in Circuits}
\label{sec:circuits}

In this section, we will first compose digital circuits from digitized hybrid gates
and reason about their executions. More specifically, it will turn out that, under
certain conditions ensuring the causality of every composed gate, the resulting
circuit will exhibit a unique execution, for every given execution of its
inputs. This uniqueness is mandatory for building digital dynamic
timing simulation tools.

Moreover, we adapt the proof that no circuit with IDM
channels can solve the bounded SPF problem utilized in \cite{FNNS19:TCAD} 
to our setting: Using the continuity result of \cref{thm:digitizedmodels:are:cont}, we will
prove that no circuit with digitized hybrid gates can solve bounded SPF. Since unbounded
SPF can be solved with IDM channels, which are simple instances of digitized hybrid gate
models, faithfulness follows.

\subsection{Executions of circuits}

\textbf{Circuits.}  
Circuits are obtained by interconnecting a set of input ports and a set
   of output ports, forming the external interface of a circuit, 
   and a finite set of digitized hybrid gates. 
We constrain the way components are interconnected in a natural
way, by requiring that any gate input, channel input and output
port is attached to only one input port, gate output or channel
output, respectively. 
Formally, a {\em circuit\/} is described by a directed graph where:
\begin{enumerate}
\item[C1)] A vertex $\Gamma$ can be either a {\em circuit input port}, a {\em
    circuit output port}, or a digitized hybrid {\em gate}.
\item[C2)] The \emph{edge} $(\Gamma,I,\Gamma')$ represents a $0$-delay connection from
the output of $\Gamma$ to a fixed input $I$ of $\Gamma'$. 
\item[C3)] Circuit input ports have no incoming edges.
\item[C4)] Circuit output ports have exactly one incoming edge and no outgoing one. 
\item[C5)] A $c$-ary gate $G$ has a single output and $c$ inputs $I_1,\dots, I_c$,
  in a fixed order, fed by incoming edges from exactly one gate output or input port.
\end{enumerate}

\smallskip
\noindent\textbf{Executions.}  An {\em execution\/} of a circuit~$\C$ is a collection of
binary signals~$s_\Gamma$ defined on $[0,\infty)$
for all vertices~$\Gamma$ of~$\C$ that respects all
the gate functions and input port signals.  Formally, the following
properties must hold:
\begin{enumerate}
\item[E1)] If~$i$ is a circuit input port, there are no restrictions on~$s_i$.
\item[E2)] If~$o$ is a circuit output port, then~$s_o = s_G$, where~$G$ is the
unique gate output connected to~$o$.
\item[E3)] If vertex~$G$ is a gate with~$c$ inputs $I_1,\dots,I_c$, ordered
  according to the fixed order condition C5), and gate function~$f_G$, then
  $s_G = f_G(s_{\Gamma_1},\dots, s_{\Gamma_c})$, where
  $\Gamma_1,\dots,\Gamma_c$ are the vertices the inputs $I_1,\dots,I_c$ of $C$
  are connected to via edges $(\Gamma_1,I_1,G), \dotsm, (\Gamma_d,I_c,G)$.
\end{enumerate}

The above definition of an execution of a circuit
is ``existential'', in the sense that it
  only allows checking for a given collection of signals whether 
  it is an execution or not: For every hybrid gate in the
  circuit, it specifies the gate output
  signal, given a {\em fixed\/} vector of input signals, all defined on
  the time domain $t\in[0,\infty)$.
A priori, this does not give an algorithm to construct executions of circuits,
in particular, when they contain feedback loops. Indeed, the parallel
composition of general hybrid automata may lead to non-unique executions and
bizarre timing behaviors known as \emph{Zeno}, where an infinite number of 
transitions may occur in finite time~\cite{LSV03}.

To avoid such behaviors in our setting, we require all discretized
hybrid gates in a circuit to be \emph{strictly causal}:

\begin{definition}[Strict causality]\label{def:strictcausality}
A digitized hybrid gate $G$ with $c$ inputs is strictly causal, 
if the pure delays $\delta_j$ for every $1\leq j \leq c$ are
positive. Let $\dmin^C>0$ be the minimal pure delay of any input
of any gate in circuit $C$.
\end{definition}

We proceed with defining input-output causality for gates,
which is based on signal transitions. Every binary signal can equivalently be described by a sequence 
of transitions: A {\em falling transition\/} at time~$t$
is the pair $(t,0)$, a {\em rising
transition\/} at time~$t$ is the pair $(t,1)$. 

\begin{definition}[Input-output causality]\label{def:inputoutputcausality}
The output transition $(t,.)\in s_G$ of a gate G \emph{is caused}
by the transition $(t',.) \in s_G^j$ on input $I_j$ of $G$, if
$(t,.)$ occurs in the mode $a_c(i_d^1(t^+),\dots,i_d^c(t^+))$, where
$i_d^j(t^+)$ is the pure-delay shifted input signal at input $I_j$ 
at the last mode switching time $t^+\leq t$ (see \cref{eq:idj}) and
 $(t',.)$ is the last transition in $s_G^j$ before or at time
$t^+-\delta_j$, i.e., $\not\exists (t'',.) \in s_G^j$ for 
$t'<t''\leq t^+-\delta_j$. 

We also assume that the output transition $(t,.)\in s_G$
\emph{causally depends} on every transition in $s_G^j$ before 
or at time $t^+-\delta_j$.
\end{definition}

Strictly causal gates satisfy the following obvious property:

\begin{lemma}\label{lem:ioseparationtime}
If some output transition $(t,.)\in s_G$ of a strictly causal digitized hybrid gate $G$
in a circuit $C$ causally depends on its input transition $(t',.)\in s_G^j$, then $t-t' \geq \delta_j$.
\end{lemma}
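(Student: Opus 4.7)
The proof should follow essentially directly from unpacking Definition~\ref{def:inputoutputcausality} together with the strict causality assumption $\delta_j>0$. The plan is to fix an output transition $(t,.)\in s_G$ that causally depends on an input transition $(t',.)\in s_G^j$, and then chain two elementary inequalities, one from the definition of ``causally depends'' and one from the definition of the mode-switching time $t^+$.

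First, I would invoke the ``causally depends'' clause of Definition~\ref{def:inputoutputcausality}: since $(t,.)$ causally depends on $(t',.)$, the transition $(t',.)$ must occur on input $I_j$ at some time before or at $t^+-\delta_j$, where $t^+\leq t$ denotes the last mode switching time at or before $t$ in the gate's mode-switch signal. In particular,
\begin{equation*}
t' \leq t^+-\delta_j.
\end{equation*}

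Next, I would use the fact that $t^+$, being the last switching time that occurs at or before $t$, satisfies $t^+\leq t$. Combining this with the previous inequality,
\begin{equation*}
t' \leq t^+-\delta_j \leq t-\delta_j,
\end{equation*}
which rearranges to $t-t'\geq \delta_j$, as required. Strict causality ($\delta_j>0$, per Definition~\ref{def:strictcausality}) is not needed for the inequality itself, but is what gives the bound its non-trivial content.

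The only point that deserves care is the consistency of this argument with the pure-delay definition of $i_d^j$ in \eqref{eq:idj}: the time $t^+-\delta_j$ is exactly the moment on the undelayed signal $s_G^j$ corresponding to the switching event at $t^+$ on $i_d^j$. So the ``last transition in $s_G^j$ before or at $t^+-\delta_j$'' is well-defined and the inequality $t'\leq t^+-\delta_j$ is the intended reading. I do not anticipate any real obstacle here; the statement is a direct bookkeeping consequence of how the pure delay and causal dependence are set up.
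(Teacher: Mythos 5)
Your proof is correct and is exactly the intended argument: the paper states this lemma as an ``obvious property'' without proof, and the chain $t'\leq t^+-\delta_j\leq t-\delta_j$ obtained by unpacking Definition~\ref{def:inputoutputcausality} is precisely what makes it so.
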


The following \cref{thm:execution} shows that every circuit made up of 
strictly causal gates has a unique execution, defined for $t\in[0,\infty)$.

\begin{theorem}[Unique execution]\label{thm:execution}
Every circuit $C$ made up of finitely many strictly causal digitized hybrid
gates has a unique execution, which either consists of finitely
many transitions only or else requires $[0,\infty)$ as its time
domain.
\end{theorem}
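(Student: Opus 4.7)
The plan is to build the unique execution by induction on time intervals of length $\dmin^C$, exploiting strict causality to decouple signals on successive intervals. Concretely, set $\tau_k = k \dmin^C$ for $k \geq 0$. I claim that for every $k$ there is a unique family $(s_\Gamma\!\restriction\![0,\tau_k))_{\Gamma \in C}$ satisfying E1--E3 restricted to $[0,\tau_k)$, and moreover this family contains only finitely many transitions on $[0,\tau_k)$ (assuming the circuit's input-port signals do so as well). Concatenating the restrictions yields the desired unique execution on $[0,\infty)$.

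For the base case $k=1$, observe that every pure delay $\delta_j$ of every gate satisfies $\delta_j \geq \dmin^C$, so by \cref{eq:idj} the shifted input $i_d^j$ of each gate equals the initial value $i_a^j(0-)$ throughout $[0,\dmin^C)$. Hence the mode-switch signal $a_G$ of each gate $G$ is constant on $[0,\dmin^C)$, equal to $a_c(i_a^1(0-),\dots,i_a^c(0-))$. By \cref{lem:uniqueness}, the analog output $x_{a_G}$ is uniquely determined on $[0,\dmin^C)$ from the prescribed initial value $x_0$, and hence so is $s_G = \Theta_\xi(\pi_1 \circ x_{a_G})$.

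For the inductive step, assume the signals are uniquely determined on $[0,\tau_k)$ and carry finitely many transitions there. For $t \in [\tau_k, \tau_{k+1})$ and any gate input with delay $\delta_j$, we have $t - \delta_j < \tau_{k+1} - \dmin^C = \tau_k$, so the shifted input $i_d^j(t)$ depends only on input-port signals or on gate outputs that were already fixed on $[0,\tau_k)$ by the induction hypothesis. Consequently the mode-switch signal $a_G$ is uniquely determined on $[0,\tau_{k+1})$; applying \cref{lem:uniqueness} to this mode-switch signal gives a unique matching analog output, whose thresholding yields the unique $s_G$ on $[0,\tau_{k+1})$. Because $a_G$ is a step function with finitely many switches on $[\tau_k,\tau_{k+1})$, the analog output $x_{a_G}$ on that interval is piecewise a trajectory of one of finitely many ODEs, each of whose first coordinates has only finitely many local optima by assumption; hence $\Theta_\xi(\pi_1 \circ x_{a_G})$ adds only finitely many transitions per interval.

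The main technical point to handle carefully is the boundary between $[0,\tau_k)$ and $[\tau_k,\tau_{k+1})$: one must verify that extending the mode-switch signal $a_G$ across $\tau_k$ and solving the ODE with the continuation initial condition $x_{a_G}(\tau_k^-)$ does not contradict the previously fixed restriction on $[0,\tau_k)$. This follows because \cref{lem:uniqueness} produces $x_{a_G}$ by inductively pasting together ODE solutions at switching times, so the extension to $[0,\tau_{k+1})$ agrees with the restriction on $[0,\tau_k)$ by uniqueness of ODE solutions. Putting everything together, the induction constructs a signal family on $[0,\infty) = \bigcup_k [0,\tau_k)$; any other execution must agree on each $[0,\tau_k)$ by uniqueness at each step, giving global uniqueness. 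Finally, every bounded interval is contained in some $[0,\tau_k)$ and thus contains finitely many transitions, so the execution is Zeno-free: if no further transitions occur after some finite time the total is finite, otherwise the signals are genuinely defined on all of $[0,\infty)$ with transitions spread out over it.
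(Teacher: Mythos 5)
Your proof is correct, and it takes a genuinely different route from the paper's. The paper constructs the execution via an event-driven simulation algorithm: it iterates over successive transition times $t_1 < t_2 < \cdots$, maintains a finite set of ``potential future transitions,'' uses \cref{lem:ioseparationtime} to guarantee that any newly caused transition lies at least $\dmin^C$ in the future, and argues (somewhat informally) that $t_\ell \to \infty$ to rule out Zeno behavior; crucially, it also assigns a \emph{causal depth} to every generated transition. You instead perform a fixed-step induction on the intervals $[0, k\dmin^C)$, observing that strict causality makes the restriction of every gate's delayed inputs to $[\tau_k,\tau_{k+1})$ a function of signals already fixed on $[0,\tau_k)$, so that \cref{lem:uniqueness} pins down each analog output and its thresholding one slab at a time. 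Your decomposition buys a cleaner uniqueness argument (any execution must agree slab by slab) and a tighter Zeno-freedom argument (finitely many transitions per bounded interval is immediate from finitely many mode switches and finitely many local optima per mode, rather than from the paper's contradiction argument about the simulation halting). What it does not provide is the causal-depth bookkeeping, which is not needed for \cref{thm:execution} itself but is exactly what the paper's constructive proof sets up for \cref{lem:Depth_Iteration} and the unrolling argument in \cref{thm:simulation}; if you adopted your proof, that machinery would have to be introduced separately. One small point to tidy up: \cref{lem:uniqueness} is stated for closed intervals and for mode-switch signals with an enumerable set of switching times, so you should note explicitly that each $a_G$ restricted to $[0,\tau_{k+1}]$ has finitely many switches (which follows from your induction hypothesis together with the fact that binary step functions on compact intervals have finitely many transitions) before invoking it.
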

\begin{proof}
We will inductively construct this unique execution by a sequence of iterations
$\ell \geq 1$ of a simple deterministic simulation algorithm, which 
determines the prefix of the sought execution up to time $t_\ell$. 
Iteration $\ell$ deals with transitions occurring at time $t_\ell$,
starting with $t_1=0$. To every transition $e$ generated throughout
its iterations, we also assign a \emph{causal depth} $d(e)$ that 
gives the maximum causal distance to an input port: $d(e)=0$ if 
$e$ is a transition at some input port, and $d(e)$ is the maximum
of $1 + d(e^j)$, $1 \leq j \leq c$, for every transition added at the output of 
a $c$-ary gate caused by transitions $e^j$ at its inputs.

Induction basis $\ell=1$: At the beginning of iteration 1, which deals with
all transitions occurring at time $t_1=0$,
all gates are in their initial mode, which is determined by the initial
values of their inputs. They are either connected to input ports, in which
case $s_i(0-)$ is used, or to the output port of some gate $G$, in which case
$s_G(0)$ (determined by the initial mode of $G$) is used. 
Depending on whether $s_i(0-)=s_i(0)$ or not, there is also an input transition $(0,s_i(0)) 
\in s_i$ or not. All transitions in the so generated execution prefix 
$[0,t_1]=[0,0]$ have a causal depth of 0.

Still, the transitions that have happened by time $t_1$ may cause additional \emph{potential
future transitions}. They are called future transitions, because they occur only 
after $t_1$, and potential because they need not occur in the final execution. 
In particular, if there is an input transition $(0,s_i(0)) 
\in s_i$, it may cause a mode switch of every gate $G$ that 
is connected to the input port $i$. Due to \cref{lem:ioseparationtime},
however, such a mode switch, and hence each of the output 
transitions~$e$ that may occur during that new mode (which all are assigned
a causal depth $d(e)=1$), of $G$ can only 
happen at or after time $t_1+\dmin^C$.
In addition, the initial mode of any gate $G$ that is not mode switched
may also cause output transitions $e$ at arbitrary times $t > 0$, which are assigned
a causal depth $d(e)=0$. Since at most finitely many critical points may exist
for every mode's trajectory, it follows that at most \emph{finitely} many 
such future potential transitions could be generated in each of the finitely 
many gates in the circuit.
Let $t_2>t_1$ denote the time of the closest transition among all
input port transitions and all the potential future transitions just introduced.

Induction step $\ell \to \ell+1$: Assume that the execution prefix
for $[0,t_\ell]$ has already been constructed in iterations 
$1,\dots,\ell$, with at most finitely many potential future transitions 
occurring after $t_\ell$. If the latter set is empty, then the execution
of the circuit has already been determined completely. 
Otherwise, let $t_{\ell+1}>t_\ell$ 
be the closest future transition time. 

During iteration $\ell+1$, all transitions occurring
at time $t_{\ell+1}$ are dealt with, exactly as in the base case:
Any transition $e$, with causal depth $d(e)$, 
happening at $t_{\ell+1}$ at a gate output or at some
input port may cause a mode switch of every gate $G$ that 
is connected to it. Due to \cref{lem:ioseparationtime},
such a mode switch, and hence each of the at most finitely
many output transitions $e'$ occurring during 
that new mode (which all are assigned a causal depth $d(e')=d(e)+1$), 
of $G$ can only happen at or after time $t_{\ell+1}+\dmin^C$. 
In addition, the at most finitely many potential future transitions w.r.t.
$t_{\ell}$ of all gates that have not been mode-switched and actually
occur at times greater than $t_{\ell+1}$ are retained, along with their
assigned causal depth, as potential future
transitions w.r.t. $t_{\ell+1}$. Overall, we again end up with at most 
finitely many potential future transitions, which completes the induction step.

To complete our proof, we only need to argue that 
$\lim_{\ell\to\infty} t_\ell = \infty$ for the case where the iterations
do not stop at some finite $\ell$. This follows immediately from the fact that, for every
$k\geq 1$, there must be some iteration $\ell \geq 1$ such that 
$t_{\ell} \geq k\dmin^C$. If this was not the case, there must be
some iteration after which no further mode switch of any gate takes place.
This would cause the set of potential future transitions to shrink in
every subsequent iteration, however, and thus the simulation algorithm to stop,
which provides the required contradiction.
\end{proof}

From the execution construction, we also immediately get:

\begin{lemma}\label{lem:Depth_Iteration}
For all~$\ell\ge 1$, (a) the simulation algorithm never assigns a
     causal depth larger than~$\ell$ to a transition generated in
     iteration~$\ell$, and (b) at the end of iteration~$\ell$ the sequence
     of causal depths of transitions in~$s_{\Gamma}$ for $t\in[0,t_\ell]$ is
     nondecreasing for all components~$\Gamma$.
\end{lemma}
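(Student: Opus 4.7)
The plan is to establish (a) and (b) simultaneously by induction on $\ell \geq 1$, using the explicit time-ordered simulation from the proof of \cref{thm:execution} and the recalled definition of causal depth, $d(e) = \max_{j}(1 + d(e^{j}))$, where $e^{j}$ denotes the last transition in the $j$th input signal $s_{G}^{j}$ at or before $t^{+} - \delta_{j}$, for $t^{+}$ the mode-entry time of the mode in which $e$ occurs.

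For the base case $\ell = 1$, iteration 1 acts at $t_{1} = 0$. Every input-port transition at $t_{1}$ receives depth $0$; every future output transition scheduled for a gate whose mode is unchanged receives depth $0$; and every future output transition scheduled inside a mode newly switched into by an initial input transition receives depth $1$. This bounds the depths by $\ell = 1$, proving (a); and since each $s_{\Gamma}|_{\{0\}}$ contains at most one transition, (b) holds trivially.

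For the induction step $\ell \to \ell+1$, assume (a) and (b) through iteration $\ell$. The only transitions whose depth is \emph{freshly} assigned in iteration $\ell+1$ are the future outputs $e'$ scheduled when the mode-switches at $t_{\ell+1}$ are processed. Their causing transitions on each input coordinate lie on $s_{G}^{j}$ at time $\leq t^{+} - \delta_{j} \leq t_{\ell+1}$ and were generated in some earlier iteration, so by the inductive (a) they have depth $\leq \ell$, which gives $d(e') \leq \ell+1$ and thus (a).

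For (b), let $s_{G}$ be a gate output signal, and let $e_{\text{prev}}$, $e_{\text{new}}$ be consecutive transitions of $s_{G}|_{[0,t_{\ell+1}]}$ with $e_{\text{new}}$ the possibly new one occurring at $t_{\ell+1}$; input and circuit-output ports are immediate. Their mode-entry times satisfy $t^{\text{prev}+} \leq t^{\text{new}+}$ because the active mode at any instant is the one most recently switched into. Hence, for every input $j$, $t^{\text{prev}+} - \delta_{j} \leq t^{\text{new}+} - \delta_{j}$, so the last transition $e_{\text{new}}^{j}$ on $s_{G}^{j}$ at or before that bound occurs no earlier than $e_{\text{prev}}^{j}$. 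Strict causality, $\delta_{j} > 0$ (\cref{def:strictcausality}), ensures $t^{\text{new}+} - \delta_{j} < t_{\ell+1}$, and since the time-ordered simulation produces no transitions in the open interval $(t_{\ell}, t_{\ell+1})$, the transition $e_{\text{new}}^{j}$ lies at time $\leq t_{\ell}$. Applying the inductive (b) to $s_{G}^{j}$ yields $d(e_{\text{prev}}^{j}) \leq d(e_{\text{new}}^{j})$ for every $j$, whence $d(e_{\text{prev}}) \leq d(e_{\text{new}})$. The main obstacle is exactly this last reduction: the depths of $e_{\text{prev}}$ and $e_{\text{new}}$ aggregate depths on potentially different input coordinates, so a direct one-to-one comparison of causing transitions is impossible; the argument must instead be routed through a per-coordinate comparison, enabled by monotonicity of the mode-entry times, strict causality, and the time-ordered processing of the simulation.
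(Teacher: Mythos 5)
Your proof is correct. Note that the paper supplies no argument at all here --- the lemma is asserted to follow ``immediately'' from the execution construction in \cref{thm:execution} --- so there is no authorial proof to compare against; your induction on the iteration index is the natural formalization of what the paper leaves implicit, and it is sound. The one genuinely non-trivial point is exactly the one you isolate for part (b): the depths of the triggering transitions of two successive modes of a gate $G$ may live on \emph{different} input coordinates, so the shorthand assignment $d(e')=d(e)+1$ (with $e$ the single mode-switching transition) that appears in the construction's prose would not by itself yield monotonicity; one must use the full definition $d(e)=\max_j\bigl(1+d(e^j)\bigr)$ over the per-input causing transitions of \cref{def:inputoutputcausality}, and then compare coordinatewise using the monotonicity of mode-entry times together with the inductive hypothesis on each feeding signal. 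Your proof does precisely this, and correctly invokes strict causality (\cref{def:strictcausality}) to push the causing transitions back to $[0,t_\ell]$ where the inductive hypothesis applies. Two cosmetic quibbles: in the step for (a), the bound ``$t^{+}-\delta_j\le t_{\ell+1}$'' need not hold for inputs $j$ with $\delta_j$ smaller than the delay of the triggering input (there $t^{+}-\delta_j$ can exceed $t_{\ell+1}$); the conclusion survives because any causing transition actually available to the algorithm at iteration $\ell+1$ was generated in an iteration $\le\ell$, which is the fact you really need. Also, for completeness you should remark that retained potential future transitions keep their previously assigned depths, so (a) concerns only the freshly scheduled ones --- which you do implicitly.
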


\subsection{Impossibility of short-pulse filtration}

The results of the previous subsection allow us to adapt the impossibility proof
of \cite{FNNS19:TCAD} to our setting. We start with the the definition of the
SPF problem:

\textbf{Short-Pulse Filtration.} 
A signal {\em contains a pulse\/} of length~$\Delta$ at time~$T_0$, if it
 contains a rising transition at time~$T_0$, a falling transition at time
 $T_0+\Delta$, and no transition in between.
The \emph{zero signal} has the initial value 0 and does not contain any transition.
A circuit {\em solves Short-Pulse Filtration (SPF)\/}, if it fulfills all of:
\begin{enumerate}
\item[F1)] The circuit has exactly one input port and exactly one output port. {\em
        (Well-formedness)}
\item[F2)] If the input signal is the zero signal, then so is the output
	signal. {\em (No generation)}
\item[F3)] There exists an input pulse such that
	the output signal is not the zero signal. {\em (Nontriviality)}
\item[F4)] There exists an~$\varepsilon>0$ such that for every input pulse the output
	signal never contains a pulse of length less than or equal to~$\varepsilon$. {\em
	(No short pulses)}
\end{enumerate}
We allow the circuit to behave arbitrarily if the input
  signal is not a single pulse or the zero signal.

A circuit {\em solves bounded SPF\/} if additionally:
\begin{enumerate}
\item[F5)] There exists a $K>0$ such that for every input pulse
  the last output transition
  is before time~$T_0+\Delta+K$, where~$T_0$ is the time of the first input transition.
  {\em (Bounded stabilization time)}
\end{enumerate}

A circuit is called a {\em forward circuit\/} if its graph is acyclic.
Forward circuits are exactly those circuits that do not contain
     feedback loops.
Equipped with the continuity of digitized hybrid gates and the fact that
     the composition of continuous functions is continuous, it is not
     too difficult to prove that the inherently discontinuous SPF problem
     cannot be solved with forward circuits.

\begin{theorem}\label{thm:no_forward_circuit}
No forward circuit solves bounded SPF.
\end{theorem}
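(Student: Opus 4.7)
The plan is to argue by contradiction. Suppose a forward circuit $C$ solves bounded SPF. Since its underlying graph is acyclic, I can process its gates in topological order, so the overall input-to-output map $\phi$ is a finite composition of the gate functions of the individual digitized hybrid gates. By \cref{thm:digitizedmodels:are:cont}, each such gate function is continuous in the $1$-norm on every bounded interval $[0,T]$, and since composition preserves continuity, $\phi$ is continuous as well.

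I then parameterize inputs by pulse length: for $\Delta \geq 0$, let $i_\Delta$ denote the pulse of length $\Delta$ starting at time $T_0$, and let $i_0$ be the zero signal. Since $\lVert i_\Delta - i_{\Delta'}\rVert_1 = |\Delta-\Delta'|$ on any interval containing $[T_0, T_0+\max(\Delta,\Delta')]$, the family $\Delta \mapsto i_\Delta$ is continuous in the $1$-norm, and therefore so is $\Delta \mapsto \phi(i_\Delta)$. By (F2), $\phi(i_0)=0$, and by (F3) there is some $\Delta^*>0$ with $\phi(i_{\Delta^*})\neq 0$. The crucial technical step is to combine (F4) and (F5) into a uniform positive lower bound on the $1$-norm of any nonzero output. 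I will fix $T = T_0+\Delta^*+K+1$, with $K$ from (F5), and show that $\lVert\phi(i_\Delta)\rVert_{1,[0,T]} \geq c := \min(1,\varepsilon)$ whenever $\Delta \in [0,\Delta^*]$ and $\phi(i_\Delta) \neq 0$. Indeed, by (F5) the output has no transitions after $T_0+\Delta+K < T$ and is therefore constant on $[T_0+\Delta+K, T]$. If that constant equals $1$, the $1$-norm on this subinterval alone is at least $T-T_0-\Delta-K \geq 1$; otherwise the nonzero output is supported in $[0,T_0+\Delta+K]$, so by (F4) it must contain at least one pulse of length strictly greater than $\varepsilon$, contributing $1$-norm at least $\varepsilon$.

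To finish, I consider the set $A = \{\Delta\in[0,\Delta^*] : \phi(i_\Delta)=0 \text{ on } [0,T]\}$. It is closed as the preimage of $\{0\}$ under the continuous map $\Delta \mapsto \phi(i_\Delta)$, nonempty since $0\in A$, and properly contained in $[0,\Delta^*]$ since $\Delta^*\notin A$. Hence $\alpha := \sup A$ lies in $A$ and satisfies $\alpha < \Delta^*$. Every $\Delta \in (\alpha,\Delta^*]$ produces a nonzero output and thus satisfies $\lVert\phi(i_\Delta)\rVert_{1,[0,T]} \geq c$ by the bound above; but continuity at $\alpha$ forces $\lVert\phi(i_\Delta)\rVert_{1,[0,T]} \to \lVert\phi(i_\alpha)\rVert_{1,[0,T]} = 0$ as $\Delta \to \alpha^+$, the desired contradiction. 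The main obstacle I anticipate is the lower-bound step: one has to choose $T$ strictly beyond $T_0+\Delta^*+K$ so that outputs stabilizing to $1$ (rather than to $0$) still contribute a full unit of $1$-norm, and one must invoke (F4) in the sharp form ``any nonzero binary signal on a bounded interval that contains no short pulses has $1$-norm at least $\varepsilon$''. Once this bound is secured, the closed-supremum argument is routine.
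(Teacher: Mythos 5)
Your proposal is correct and follows essentially the same route as the paper: continuity of the composed input--output map via \cref{thm:digitizedmodels:are:cont}, plus the observation that (F4) and (F5) together preclude nonzero outputs of small $1$-norm on a suitably chosen window $[0,T]$. The only difference is the endgame --- the paper invokes the intermediate value theorem to produce a pulse length whose output has $1$-norm exactly $\varepsilon$ and shows that single signal violates (F4) or (F5), whereas you establish a uniform gap $(0,c)$ in the achievable norms and contradict continuity at $\sup A$; these are two packagings of the same discontinuity argument.
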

\begin{proof}
Suppose that there exists a forward circuit that solves bounded SPF with
stabilization time bound~$K$.
Denote by~$s_\Delta$ its output signal when feeding it a $\Delta$-pulse at
time~$0$ as the input.
Because~$s_\Delta$ in forward circuits is a finite composition of continuous
functions by
Theorem~\ref{thm:digitizedmodels:are:cont}, $\lVert s_\Delta \rVert_{[0,T],1}$
depends continuously on~$\Delta$, for any $T$.

By the nontriviality condition (F3) of the SPF problem, there exists
some~$\Delta_0$ such that $s_{\Delta_0}$ is not the zero signal.
Set $T = 2\Delta_0 + K$.

Let~$\varepsilon>0$ be smaller than both~$\Delta_0$ and $\lVert s_{\Delta_0} \rVert_{[0,T],1}$.
We show a contradiction by finding some~$\Delta$ such that~$s_\Delta$ either
contains a pulse of length less than~$\varepsilon$ (contradiction to the no
short pulses condition (F4)) or contains a transition
after time $\Delta+K$ (contradicting the bounded stabilization time
condition~(F5)).

Since $\lVert s_\Delta \rVert_{[0,T],1}\to0$ as $\Delta\to0$ by the no generation condition (F2)
of SPF,
there exists a~$\Delta_1<\Delta_0$ such that $\lVert s_{\Delta_1} \rVert_{[0,T],1}=\varepsilon$
by the intermediate value property of continuity.
By the bounded stabilization time condition (F5), there are no transitions
in~$s_{\Delta_1}$ after time $\Delta_1+K$.
Hence, $s_{\Delta_1}$ is~$0$ after this time because otherwise it is~$1$ for the
remaining duration $T - (\Delta_1+K) > \Delta_0 > \varepsilon$, which would
mean that $\lVert s_{\Delta_1} \rVert_{[0,T],1}>\varepsilon$.
Consequently, there exists a pulse in~$s_{\Delta_1}$ before
time $\Delta_1+K$.
But any such pulse is of length at most~$\varepsilon$ because
$\lVert s_{\Delta_1} \rVert_{[0,\Delta_1+K],1}  \leq 
\lVert s_{\Delta_1} \rVert_{[0,T],1}=\varepsilon$.
This is a contradiction to the no short pulses condition (F4).
\end{proof}

We next show how to simulate (part of) an execution of an arbitrary
     circuit~$\C$ by a forward circuit~$\C'$ generated from~$\C$ by
     the unrolling of feedback loops.
Intuitively, the deeper the unrolling, the longer the time~$\C'$
     behaves as~$\C$.

\begin{definition}
Let~$\C$ be a circuit, $V$ a vertex of~$\C$, and $k\geq0$.
We define the {\em $k$-unrolling of~$\C$ from~$V$}, denoted by~$\C_k(V)$, 
to be a directed acyclic graph with a single sink, constructed
as follows:

The unrolling $\C_k(I)$ from input port $I$ is just a copy of that input port.
The unrolling $\C_k(O)$ from output port $O$ with incoming channel $C$ and predecessor $V$ comprises a copy of the output port $O^{(k)}$ and the unrolled circuit $\C_k(V)$ with its sink connected to $O^{(k)}$ by an edge. 

The 0-unrolling $\C_0(B)$ from hybrid gate $B$ is a trivial Boolean gate $X_v$ without inputs and the constant output value~$v$ equal to $B$'s initial digitized output value. For $k>0$, the $k$-unrolling $\C_k(B)$ from gate $B$ comprises an exact copy of that gate $B^{(k)}$. 
Additionally, for every incoming edge of $B$ from $V$ in $\C$, it contains the circuit $\C_{k-1}(V)$  with its sink connected to $B^{(k)}$.
Note that all copies of the same input port are considered to be the same.
\end{definition}

To each component~$\Gamma$ in~$\C_k(V)$, we assign a value~$z(\Gamma)
\in\IN_0\cup\{\infty\}$ as follows:
     $z(\Gamma) = \infty$
     if~$\Gamma$ has no predecessor (in particular, is an input port) 
and $\Gamma\not\in\{X_0,X_1\}$. Moreover, $z(X_0)=z(X_1) = 0$, 
     $z(V)=z(U)$ if $V$ is an output port connected by an edge to $U$, and
     $z(B) = \min_{c\in E^B}\{1+z(c)\}$ if $B$ is a gate with its inputs connected
to the components in the set $E^B$.
\cref{fig:unrolling} shows an example of a circuit and an
     unrolled circuit with its $z$~values.

\begin{figure}
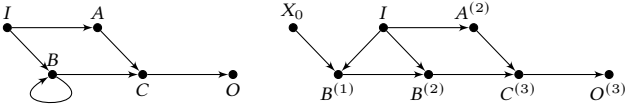

\centering
  {
    \tikzfigureunrolling}
\caption{Circuit~$\C$ (left) and $\C_3(O)$ (right) under the assumption that the gate~$B$ has initial value~$0$. It is $z(X_0)=0$,
	$z(I)=z(A^{(2)})=\infty$, $z(B^{(1)})=1$, $z(B^{(2)})=2$, $z(C^{(3)})=3$, and
$z(O^{(3)})=3$.}
\label{fig:unrolling}
\end{figure}

Noting that, for every component $\Gamma$ in $C_k(V)$, $z(\Gamma)$ is the number of gates on the shortest path from an $X_v$ node to $\Gamma$, or $z(\Gamma)=\infty$ if no such path exists, we immediately get:

\begin{lemma}\label{lem:unrol:kisz}
  The $z$-value assigned to the sink vertex $V^{(k)}$ of a $k$-unrolling $\C_k(V)$ of $\C$ from $V$ satisfies $z(V^{(k)})\geq k$.
\end{lemma}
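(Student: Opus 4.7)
The plan is to prove the lemma by induction on $k$, establishing the statement simultaneously for all three possible types of the vertex $V$ (circuit input port, circuit output port, and hybrid gate) at each level.

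First, I would spell out what the sink of $\C_k(V)$ is in each case. If $V$ is an input port $I$, then $\C_k(I)$ is a copy of $I$ itself, which is its own sink; in that case $z(I)=\infty\geq k$ trivially, so this case disposes of itself at every level. If $V$ is a gate $B$, the sink of $\C_k(B)$ is $B^{(k)}$ (for $k>0$) or the constant Boolean node $X_v$ (for $k=0$). If $V$ is an output port $O$ with predecessor $U$ in $\C$, the sink of $\C_k(O)$ is $O^{(k)}$, whose single in-edge comes from the sink of $\C_k(U)$; by the $z$-recursion for output ports, $z(O^{(k)})$ equals the $z$-value of that inner sink.

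For the base case $k=0$, the gate case gives sink $X_v$ with $z(X_v)=0\geq 0$, and the output-port case reduces to the input-port or gate case applied to the predecessor $U$ (which, by the circuit rules C3)--C5), must itself be an input port or a gate), both of which were just handled. For the inductive step, assume that for every vertex $W$ the sink of $\C_{k-1}(W)$ has $z$-value at least $k-1$. For a gate $B$ with predecessors $V_1,\dots,V_c$, the inputs of $B^{(k)}$ in $\C_k(B)$ are precisely the sinks of the subcircuits $\C_{k-1}(V_j)$, each of which by the inductive hypothesis carries $z$-value $\geq k-1$; the recursion
\[
z(B^{(k)}) \;=\; \min_{c\in E^{B^{(k)}}}\bigl\{1+z(c)\bigr\} \;\geq\; 1+(k-1) \;=\; k
\]
then gives the required bound. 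The output-port case at level $k$ is again inherited from the input-port or gate case at the same level $k$ via $z(O^{(k)})=z(\text{sink of }\C_k(U))$, without circularity because $U$ is never itself an output port.

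I do not expect a real obstacle here; the only subtlety worth being careful about is the mutual dependency between the three vertex types within a single level $k$. The right way to handle this is to order the case analysis so that the output-port case is derived from the already-established input-port and gate cases at the same $k$, which is legitimate since an output port's predecessor in $\C$ cannot be another output port. Everything else is a direct application of the $z$-recursion together with the inductive hypothesis.
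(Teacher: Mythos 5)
Your proof is correct. The paper does not give a formal proof at all: it simply observes that $z(\Gamma)$ equals the number of gates on a shortest path from an $X_v$ node to $\Gamma$ (or $\infty$ if none exists) and declares the lemma immediate, since the $X_v$ nodes in $\C_k(V)$ arise only from $0$-unrollings and are therefore separated from the sink by $k$ gate copies. Your induction on $k$ is exactly the rigorous version of that observation, and your care in ordering the three vertex types within a level (resolving the output-port case via its predecessor, which by C3)--C5) cannot itself be an output port) is a legitimate and worthwhile refinement rather than a deviation.
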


Recalling the causal depths assigned to transitions during the execution construction in \cref{thm:execution}, we are now in the position to prove the result for a circuit simulated by an unrolled circuit.

\begin{theorem}\label{thm:simulation}
Let~$\C$ be a circuit with input port~$I$ and output port~$O$ that solves
bounded SPF.
Let~$\C_k(O)$ be an unrolling of~$\C$, $\Gamma$ a component in~$\C$, and
     $\Gamma'$ a copy of $\Gamma$ in~$\C_k(O)$.
For all input signals~$s_I$ on $I$, if a transition~$e$ is generated for~$\Gamma$
 by the execution construction algorithm run on
     circuit~$\C$ with input signal~$s_I$
     and~$d(e) \le z(\Gamma')$, then~$e$ is also generated
     for~${\Gamma'}$ by the algorithm run on circuit~$\C_k(O)$
     with input signal~$s_I$; and vice versa.
\end{theorem}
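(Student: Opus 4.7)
The plan is to prove \cref{thm:simulation} by induction on the iteration number~$\ell$ of the deterministic execution-construction algorithm from the proof of \cref{thm:execution}, run in parallel on~$\C$ and on~$\C_k(O)$ with the common input signal~$s_I$. The invariant I will maintain at the end of iteration~$\ell$ is: for every vertex $\Gamma$ of~$\C$ and every copy $\Gamma'$ of $\Gamma$ that appears in~$\C_k(O)$, the set of transitions $e\in s_\Gamma$ produced in the $\C$-run with $d(e)\le z(\Gamma')$ coincides, as a set of timestamped pairs, with the set of transitions produced at $\Gamma'$ in the $\C_k(O)$-run satisfying the same bound. By \cref{lem:Depth_Iteration}(a) the causal depth of every transition created in iteration~$\ell$ is at most~$\ell$, so taking $\ell$ large enough is guaranteed to cover all transitions relevant to the theorem.

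For the base case $\ell=1$, the only events at time~$0$ are input-port transitions $(0,s_I(0))$ when $s_I(0-)\neq s_I(0)$, which are identical in both runs because $z(I)=\infty$ for every input port~$I$; gates start in matching initial modes and states, and the constants $X_v$ that appear at depth~$0$ of the unrolling are defined to carry exactly the initial digitized output of the corresponding gate, so the invariant at $z$-value~$0$ holds trivially. In the inductive step, consider a new transition~$e$ created at a gate~$B$ in iteration $\ell+1$ of the $\C$-run, and any copy $B^{(k)}$ with $z(B^{(k)})\ge d(e)$. By \cref{def:inputoutputcausality}, $e$ is generated in the mode selected by the pure-delayed inputs of~$B$ and causally depends on the most recent transition $e_j$ on each input $I_j$ before the triggering mode-switch time; since $d(e)=\max_j\{1+d(e_j)\}$, we obtain $d(e_j)\le d(e)-1$ for every such~$e_j$.

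The recursive definition $z(B^{(k)})=\min_c\{1+z(c)\}$ over the predecessors~$c$ of $B^{(k)}$ then forces $z(c)\ge d(e)-1$ for each such~$c$, so the induction hypothesis matches each input transition $e_j$ at the corresponding predecessor of $B^{(k)}$. Consequently, the mode-switch signal at the inputs of $B^{(k)}$ agrees with the one at the inputs of~$B$ on the time interval up to the occurrence of~$e$; the existence and uniqueness of the matching output signal (\cref{lem:uniqueness}) together with the determinism of the thresholding operator $\Theta_\xi$ then force $B^{(k)}$ to emit the same transition~$e$ at the same time. The converse direction is symmetric: a transition at $B^{(k)}$ with $d(\cdot)\le z(B^{(k)})$ depends only on predecessors whose $z$-value is already large enough for the induction hypothesis to apply in the opposite direction.

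The main obstacle will be the bookkeeping that reconciles the temporal index $\ell$ with the structural index~$z$ across two different simulations. In particular, I expect some care is needed to verify that the \emph{potential future transitions} carried between iterations in the proof of \cref{thm:execution} survive to the correct iteration with matching causal depths in both runs, and that the strict-causality assumption (\cref{def:strictcausality}), which guarantees a positive gap~$\dmin^C$ between cause and effect, prevents any pathological desynchronization between the two executions. Handling the boundary cases uniformly with the gate case—input ports ($z=\infty$) and $X_v$-constants ($z=0$), together with copies of a vertex that may occur several times in $\C_k(O)$ at distinct $z$-values—will require a small but delicate argument when setting up the invariant.
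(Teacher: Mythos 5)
Your proposal is correct and follows essentially the same route as the paper: the paper phrases it as a minimal-counterexample argument over the first violating transition rather than an explicit induction on the iteration index, but the core mechanism is identical --- matching initial modes and input-port transitions in the base case, and in the step using $d(e)=1+d(e')$ for the causing transition together with $z(B)=\min_c\{1+z(c)\}$ (hence $z(c)\ge z(B)-1\ge d(e)-1$) and \cref{lem:Depth_Iteration} to push the hypothesis to the predecessors, then invoking determinism of the gate dynamics. The bookkeeping concerns you flag at the end are real but are handled at the same (informal) level of detail in the paper's own proof.
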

\begin{proof}
Assume that $e$ is the first transition violating the theorem.
The input signal is the same for both circuits, and the initial digitized values of gates in $\C$ and both their copies in $\C_k(O)$ and the $X_v$ gates resulting from their $0$-unrolling are equal as well. Hence, $e$ cannot be any such transition (added in iteration 1 only).

If $e$ was added to the output of a gate $B$ in either circuit, the transition $e'$ resp.\ $e''$ at one of its inputs that caused $e$ in $\C$ resp.\ $\C_k(V)$ must have been different. 
These transitions $e'$ resp.\ $e''$ must come from the output of some other gate $B_1$,
and causally precede $e$. Hence,
by \cref{def:inputoutputcausality}, $d(e)=d(e')+1$, and by \cref{lem:Depth_Iteration}, $d(e)\geq d(e'')$. Also by definition, $z(B)=z(B_1)+1$ in $C_k(O)$. Since $d(e)\leq z(B)$ by assumption, 
we find $d(e')\leq z(B_1)$ and $d(e'') \leq z(B)$, so applying our theorem to 
$e'$ and $e''$ yields a contradiction to $e$ being the first violating transition.
\end{proof}

We can finally prove that bounded SPF is not solvable, even with
non-forward circuits.

\begin{theorem}\label{thm:main_impossibility}
No circuit solves bounded SPF.
\end{theorem}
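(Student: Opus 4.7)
The plan is to reduce the general case to the forward-circuit case already ruled out by \cref{thm:no_forward_circuit}. Suppose for contradiction that some circuit $\C$ with input port $I$ and output port $O$ solves bounded SPF with stabilization time bound $K$. By the nontriviality condition (F3), fix a pulse width $\Delta_0$ such that feeding $\C$ a $\Delta_0$-pulse at time $T_0$ yields a non-zero output signal. For every input pulse of width $\Delta \in [0,\Delta_0]$ starting at $T_0$, and for the zero input, condition (F5) guarantees that the execution of $\C$ contains no transition later than $T_0 + \Delta + K \leq T_0 + \Delta_0 + K$.

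Next, I would uniformly bound the causal depth of all transitions occurring in these executions. By \cref{lem:ioseparationtime}, any transition at time $t$ that causally depends, through a chain of $d$ gate outputs, on an input transition at time $T_{\mathrm{in}} \geq T_0$ satisfies $t \geq T_{\mathrm{in}} + d \cdot \dmin^C$. Hence every transition in any of the relevant executions has causal depth at most $k_0 := \lceil (\Delta_0 + K)/\dmin^C \rceil$, a finite number because strict causality ensures $\dmin^C > 0$.

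The key construction is then the $k_0$-unrolling $\C_{k_0}(O)$, which is a forward circuit with the same input port $I$ and a single output port $O^{(k_0)}$, hence well-formed in the sense of (F1). By \cref{lem:unrol:kisz}, $z(O^{(k_0)}) \geq k_0$, so \cref{thm:simulation} implies that for each of our input signals the set of transitions generated at $O$ in $\C$ and at $O^{(k_0)}$ in $\C_{k_0}(O)$ coincide. Since gates in $\C_{k_0}(O)$ are exact copies and the terminal $X_v$ leaves carry the matching initial digital values, the initial output values of $O$ and $O^{(k_0)}$ agree, so the full output signals coincide. Consequently $\C_{k_0}(O)$ inherits (F2)--(F5) from $\C$ and thus also solves bounded SPF, contradicting \cref{thm:no_forward_circuit}.

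The main obstacle is securing that a single unrolling depth $k_0$ is large enough to capture all transitions relevant for checking the SPF conditions simultaneously; this requires combining the time bound from (F5) with the per-gate time lower bound from \cref{lem:ioseparationtime}, and hence depends crucially on strict causality. Once this uniform bound is in place, the simulation theorem does the rest and transfers faithfulness of the output signal to the forward setting.
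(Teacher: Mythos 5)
Your proposal is correct and follows essentially the same route as the paper: bound the causal depth of all transitions relevant up to the stabilization time $\Delta_0+K$ (using strict causality and \cref{lem:ioseparationtime}), unroll the circuit to that depth, invoke \cref{lem:unrol:kisz} and \cref{thm:simulation} to transfer the output behavior to the forward circuit, and contradict \cref{thm:no_forward_circuit}. The only difference is presentational — you make the depth bound $k_0=\lceil(\Delta_0+K)/\dmin^C\rceil$ explicit where the paper merely asserts the existence of such a bound $\zeta$.
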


\begin{proof}
We first note that the impossibility of bounded SPF also implies the
impossibility of bounded SPF when restricting pulse lengths to be at most some
$\Delta_0>0$.

Since all transitions generated in the execution construction
\cref{thm:execution} up to any bounded time $t_\ell$ have bounded causal depth,
let~$\zeta$ be an upper bound on the causal depth of transitions up to the
SPF stabilization time bound~$\Delta_0+K$.
Then, by Theorem~\ref{thm:simulation} and Lemma~\ref{lem:unrol:kisz}, the $\zeta$-unrolled
circuit~$\C_\zeta(O)$ has the same output transitions
as the original circuit~$\C$ up to time~$\Delta_0+K$, and hence, by definition of
bounded SPF, the same transitions for all times.
But since~$\C_\zeta(O)$ is a forward circuit, it cannot solve bounded SPF by
Theorem~\ref{thm:no_forward_circuit}, i.e., neither can~$\C$.        
\end{proof}

\section{Applications}
\label{sec:examples}

We next discuss three examples of thresholded mode-switched ODE systems.
For all non-closed systems, the proven continuity shows that similar digital inputs
  lead to similar digital outputs.

We start with an introductory example from control theory,
  the bang-bang heating controller for thermodynamic systems.
Following \cite{henzinger2000theory}, let $x(t)$ be the system's temperature at time~$t$
  and $h(t)$ be the mode of the binary heating signal that can be off (0)
  or on (1).
With a pure delay $\delta > 0$ for the heating to take effect,
  we assume that the heat flow is described as
\begin{align}
\dot{x} = 
\begin{cases}
 -0.1 x(t) & \text{if } h(t-\delta) = 0\\
5-0.1 x(t) & \text{if } h(t-\delta) = 1\\
\end{cases}
\end{align}
for the heating being off or on, respectively: 
the temperature falls to $0$ in the former case and approaches
$50$ degrees in the latter.

The heating signal is controlled by a bang-bang controller (with hysteresis) 
with two threshold temperatures, 19 and 20 degrees. It could be implemented by an ideal
SR-latch, with pure delay $\delta$, where the Set port (S) is driven by the inverted
$\neg \Theta_{19}(x)$,
the reset port (R) is driven by $\Theta_{21}(x)$, and
the output of the latch controlling the heating mode signal~$h$.

\medskip

In fact, digital circuits are a particularly rich and interesting source of 
application examples in general. We will demonstrate this by means of 
two hybrid gate models for a CMOS \NOR\ gate (see \cref{fig:nor_CMOS-twoCap} for 
the schematics), namely, the simple model proposed in \cite{FMOS22:DATE} (as an instance
  of an autonomous ODE model) and the advanced model presented in \cite{ferdowsi2022accurate}
 (as an instance of a non-autonomous ODE model). The SR-latch from the previous 
example can be implemented via two cross-coupled \NOR\ gates.

\begin{figure}[t!]
  \centering
    \includegraphics[height=0.45\linewidth]{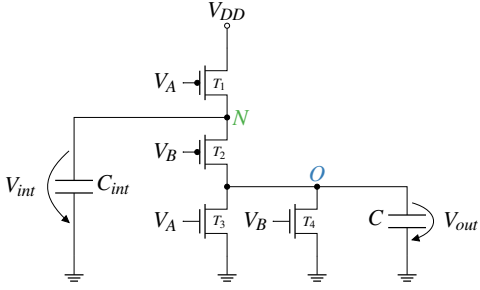}%
  \caption{Transistor level implementation of the \NOR\ gate.}
     \label{fig:nor_CMOS-twoCap}
\end{figure}

\subsection{Simple Hybrid Model}
\label{sec:SimpleModel}

The \emph{simple hybrid gate model} proposed in \cite{FMOS22:DATE} replaces all transistors in \cref{fig:nor_CMOS-twoCap} by ideal zero-time switches, which are switched on and off at the relevant input threshold voltage $V_{th} = \vdd /2$ crossing times. More precisely, depending on whether the corresponding input is 1 or 0, 
every pMOS transistor is removed ($R = \infty$) resp.\ replaced by a fixed resistor $R < \infty$, and vice versa for an nMOS transistor. This leads to the following system of coupled autonomous first-order ODEs governing the analog trajectories of the gate’s output in the respective mode:

\begin{itemize}
\item System $(1,1)$: $\va=1$, $\vb=1$:
If inputs $A$ and $B$ are 1, both \nmos\ transistors are conducting and thus replaced by resistors, causing the output $O$ to be discharged in parallel. By contrast, $N$ is
completely isolated and keeps its value. This leads to the following ODEs:
\begin{align*}
  &\ \displaystyle{{\operatorname{d}\over\operatorname{d}\!t} \vint{}(t) \choose {\operatorname{d}\over\operatorname{d}\!t} \vout(t)}=
  \displaystyle{f_1(\vint{}(t), \vout(t)) \choose f_2(\vint{}(t), \vout(t))}=
  \displaystyle{0  \choose - \bigl(\frac{1}{\cout R_3} + \frac{1}{\cout R_4}\bigr) \vout(t)}
\end{align*}

\item System $(1,0)$: $\va=1$, $\vb=0$:
Since $T_1$ and $T_4$ are open, node $N$ is connected to $O$, and $O$ to \gnd. Both capacitors have to be discharged over resistor $R_3$, resulting in less current that is available for discharging~$\cout$. One obtains:
\begin{align*}
  &\ \displaystyle{{\operatorname{d}\over\operatorname{d}\!t} \vint{}(t) \choose {\operatorname{d}\over\operatorname{d}\!t} \vout(t)}=
  \displaystyle{f_3(\vint{}(t), \vout(t)) \choose f_4(\vint{}(t), \vout(t))}=
  \displaystyle{-\frac{\vint{}(t)}{\cint{} R_{2}}  +  \frac{\vout(t)}{\cint{} R_{2}}  \choose \frac{\vint{}(t)}{\cout R_{2}}  - \bigl(\frac{1}{\cout R_2} + \frac{1}{\cout 
                               R_3}\bigr) \vout(t) }
\end{align*}

\item System $(0,1)$: $\va=0$, $\vb=1$:
Opening transistors $T_2$ and $T_3$ again decouples the nodes $N$ and $O$. We thus get
\begin{align*}
  &\ \displaystyle{{\operatorname{d}\over\operatorname{d}\!t} \vint{}(t) \choose {\operatorname{d}\over\operatorname{d}\!t} \vout(t)}=
  \displaystyle{f_5(\vint{}(t), \vout(t)) \choose f_6(\vint{}(t), \vout(t))}=
  \displaystyle{-\frac{\vint{}(t)}{\cint{} R_{1}} +  \frac{\vdd}{\cint{} R_1}  \choose - \frac{\vout(t)}{\cout R_4} }
\end{align*}

\item System $(0,0)$: $\va=0$, $\vb=0$:
Closing both \pmos\ transistors causes both capacitors to be charged over the same resistor~$R_1$, similarly to system $(1,0)$. Thus
\begin{align*}
  &\ \displaystyle{{\operatorname{d}\over\operatorname{d}\!t} \vint{}(t) \choose {\operatorname{d}\over\operatorname{d}\!t} \vout(t)}=
  \displaystyle{f_7(\vint{}(t), \vout(t)) \choose f_8(\vint{}(t), \vout(t))}=\\
  &\ \displaystyle{-\bigl(\frac{1}{\cint{}(t) R_1} + \frac{1}{\cint{}(t) R_2}\bigr) \vint{} + \frac{\vout(t)}{\cint{} R_2} +  \frac{\vdd}{\cint{} R_1}  \choose \frac{\vint{}(t)}{\cout R_2}  - \frac{\vout(t)}{\cout R_2} }
\end{align*}
\end{itemize}

Every $f_i$, $i \in \{1, \ldots, 8 \}$, is a mapping from $U=(0,1)^2 \subseteq \IR^2$ to $\IR$, whereat $U$ is the vector of the voltages at the nodes $N$ and $O$ in \cref{fig:nor_CMOS-twoCap}. Solving the above ODEs provides analytic expressions for these voltage trajectories, which can even be inverted to obtain the relevant gate delays.
As it turned out in \cite{FMOS22:DATE}, although the model perfectly covers the MIS effects in the case of falling output transitions, it fails to do so in the rising output transitions case. Nevertheless, despite this accuracy shortcoming, the results of the present paper imply that the model is faithful. More specifically, 
we obtain the following theorem:

\begin{theoremrep}[]\label{thm:SimpleModel}
For any $i \in \{1, \ldots, 8 \}$, the mapping $f_i$, defined above, is Lipschitz continuous.
\end{theoremrep}
\begin{proof}
Albeit the proof is evident, we elaborate it for $f_7$; similar arguments apply to the other cases. Let $K= max \bigl\{ (\frac{1}{\cint{} R_1} + \frac{1}{\cint{} R_2}) , \frac{1}{\cint{} R_2} \bigr\}$.
For any voltages of $\vint{}(t)^{(1)}$, $\vint{}(t)^{(2)}$, $\vout(t)^{(1)}$, and $\vout(t)^{(2)}$ belonging to $(0,1)$, we find
\begin{align*}
&\bigl\lVert f_7(\vint{}(t)^{(1)}, \vout(t)^{(1)}) - f_7(\vint{}(t)^{(2)}, \vout(t)^{(2)}) \bigr\rVert =\\
& \bigl\lVert -\bigl(\frac{1}{\cint{} R_1} + \frac{1}{\cint{} R_2}\bigr)(\vint{}(t)^{(1)}- \vint{}(t)^{(2)}) +\\
& \quad\frac{1}{\cint{} R_2}\bigl(\vout(t)^{(1)}- \vout(t)^{(2)}) \bigr\rVert \leq  \\
& K \bigl\lVert (\vint{}(t)^{(1)}- \vint{}(t)^{(2)}) + (\vout(t)^{(1)}- \vout(t)^{(2)}) \bigr\rVert.
\end{align*}
\end{proof}

Consequently, we can instantiate \cref{def:dhm} with
\begin{align}
a_c(i_d^A,i_d^B) = \begin {cases}
\displaystyle{f_1(\vint{}(t), \vout(t)) \choose f_2(\vint{}(t), \vout(t))} &   \ \ (i_d^A,i_d^B)=(1,1)\\
\displaystyle{f_3(\vint{}(t), \vout(t)) \choose f_4(\vint{}(t), \vout(t))} &   \ \ (i_d^A,i_d^B)=(1,0)\\
\displaystyle{f_5(\vint{}(t), \vout(t)) \choose f_6(\vint{}(t), \vout(t))} &   \ \ (i_d^A,i_d^B)=(0,1)\\
\displaystyle{f_7(\vint{}(t), \vout(t)) \choose f_8(\vint{}(t), \vout(t))} &   \ \ (i_d^A,i_d^B)=(0,0) \nonumber
\end {cases}
\end{align}

\subsection{Advanced Hybrid Model}
\label{sec:AdvancedModel}

Unlike the simple hybrid model \cite{FMOS22:DATE} outlined in the previous section, 
the \emph{advanced hybrid gate model} proposed in \cite{ferdowsi2022accurate} 
covers all MIS delay behaviors properly. It can be viewed as a generalization of the simple model, in which switching-on the pMOS transistors is not instantaneous but instead governed by a simple time evolution function representing the Shichman-Hodges transistor model~\cite{ShichmanHodges}. To be more specific, the idea is to replace the transistors with time-variant resistors (see \cref{FigureNOR-GATE}), so that the values of $R_i(t)$, $i \in \{1,\ldots,4 \}$, vary between some fixed on-resistance $R_i$ and the off-resistance $\infty$, according to the following functions:
\begin{align}
R_i^{\on}(t) &= \frac{\alpha_i}{t-t^{\on}}+R_i; \ t \geq t^{\on}, \label{on_mode}\\
R_i^{\off}(t) &= \infty; \ t \geq t^{\off}. \label{off_mode}
\end{align}
Herein, $\alpha_i$ [\si{\ohm\s}] and on-resistance $R_i$ [\si{\ohm}] are constant slope parameters; $t^{\on}$ resp.\ $t^{\off}$ represent the time when the respective transistor is switched on
resp.\ off. The switching-on of the nMOS transistors happens instantaneously also here, so $\alpha_3=\alpha_4=0$.

\begin{figure}[t!]
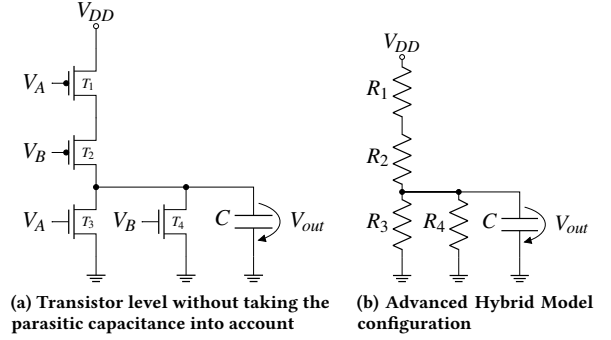

  \centering
  \subfloat[Transistor level without taking the parasitic capacitance into account]{
    \includegraphics[height=0.45\linewidth]{\figPath{nor_RC.pdf}}%
    \label{fig:nor_CMOS}}
  \hfil
  \subfloat[Advanced Hybrid Model configuration]{
    \includegraphics[height=0.4\linewidth]{\figPath{nor_R.pdf}}%
    \label{FigureNOR-GATE}}
  \caption{Implementations of a CMOS \NOR\ gate.}
\end{figure}

Applying Kirchhoff's rules to \cref{FigureNOR-GATE} leads to
$ C\frac{\dd \vout}{\dd t}= \frac{\vdd-\vout}{R_1(t)+R_2(t)} -
\frac{\vout}{R_3(t)\ ||\ R_4(t)}\ ,$
which can be transformed into the non-homogeneous non-autonomous ODE with non-constant coefficients
\begin{equation}
\label{Eq0}
\frac{\dd V_{out}}{\dd t} = f(t, V_{out}(t)) =-\frac{V_{out}(t)}{C\,R_g(t)}+U(t),
\end{equation}
where
$\frac{1}{R_g(t)}=\frac{1}{R_1(t)+R_2(t)}+\frac{1}{R_3(t)}+\frac{1}{R_4(t)}$ and
$U(t)=\frac{V_{DD}}{C(R_1(t)+R_2(t))}$. 

As comprehensively described in \cite{ferdowsi2022accurate}, depending on each particular resistor's mode in each input state transition, different expressions for $R_g(t)$, $U(t)$ and thus for $f(t,\vout(t))$ are obtained. They 
are summarized in \cref{tab:T1}. Note that we have used the notation $R_1=R_{p_A}$, $R_2=R_{p_B}$ with abbreviations $2R= R_{p_A}+R_{p_B}$, $R_3=R_{n_A}$, and $R_4=R_{n_B}$ for the two nMOS transistors $T_3$ and $T_4$. Due to the symmetry, we end up with only six different functions. 

The following theorem shows that they are continuous in the first argument and Lipschitz continuous in the second argument. 

\begin{table}[t]
\caption{$f(t,V_{out}(t))$ for each state transition.} 
\scalebox{0.94}{
\begin{tabular}{ccc}
\hline
State transition          &  & $f(t,V_{out}(t))$                                                                                                                                              \\ \cline{1-1} \cline{3-3} 
$(0,0) \rightarrow (1,0)$ &  & $f_1(t,V_{out}(t)) \doteq \frac{-V_{out}(t)}{CR_{nA}}$                                                                                                         \\
$(1,1) \rightarrow (1,0)$ &  & $f_1(t,V_{out}(t)) \doteq \frac{-V_{out}(t)}{CR_{nA}}$                                                                                                         \\
$(0,1) \rightarrow (1,0)$ &  & $f_1(t,V_{out}(t)) \doteq \frac{-V_{out}(t)}{CR_{nA}}$                                                                                                         \\
$(0,0) \rightarrow (0,1)$ &  & $f_2(t,V_{out}(t)) \doteq \frac{-V_{out}(t)}{CR_{nB}}$                                                                                                         \\
$(1,1) \rightarrow (0,1)$ &  & $f_2(t,V_{out}(t)) \doteq \frac{-V_{out}(t)}{CR_{nB}}$                                                                                                         \\
$(1,0) \rightarrow (0,1)$ &  & $f_2(t,V_{out}(t)) \doteq \frac{-V_{out}(t)}{CR_{nB}}$                                                                                                         \\
$(1,0) \rightarrow (0,0)$ &  & $f_3(t,V_{out}(t)) \doteq \frac{ \bigl (-V_{out}(t) +\vdd \bigr )}{C(\frac{\alpha_1}{t}+\frac{\alpha_2}{t+ \Delta}+ 2R)}$ \\
$(0,1) \rightarrow (0,0)$ &  & $f_4(t,V_{out}(t)) \doteq   \frac{\bigl (-V_{out}(t) +\vdd \bigr )}{C(\frac{\alpha_1}{t+ \Delta}+\frac{\alpha_2}{t}+ 2R)}$\\
$(1,1) \rightarrow (0,0)$ &  & $f_5 (t, V_{out}(t)) \doteq  \frac{\bigl (-V_{out}(t) +\vdd \bigr )t}{C ( 2Rt+ \alpha_1+\alpha_2 )}$                                                           \\
$(1,0) \rightarrow (1,1)$ &  & $f_6(t,V_{out}(t)) \doteq \frac{-V_{out}(t)}{C}(\frac{1}{R_{n_A}}+\frac{1}{R_{n_B}})$                                                                          \\
$(0,1) \rightarrow (1,1)$ &  & $f_6(t,V_{out}(t)) \doteq \frac{-V_{out}(t)}{C}(\frac{1}{R_{n_A}}+\frac{1}{R_{n_B}})$                                                                          \\
$(0,0) \rightarrow (1,1)$ &  & $f_6(t,V_{out}(t)) \doteq \frac{-V_{out}(t)}{C}(\frac{1}{R_{n_A}}+\frac{1}{R_{n_B}})$                                                                          \\ \hline
\end{tabular}}
\label{tab:T1}
\end{table}

\begin{theoremrep}[]\label{thm:AdvancedModel}
Let $F= \{f_1 , \ldots, f_6 : \IR \times (0,1) \rightarrow \IR \}$ be the set of all functions described in \cref{tab:T1}, modulo symmetry. Every $f_i \in F$, where $i \in \{1, \ldots, 6 \}$, is continuous and Lipschitz continuous in the second argument $V_{out}(t)$.
\end{theoremrep}
\begin{proof}
The proof is clear for functions $f_1$, $f_2$, and $f_6$. It is also straightforward for $f_5$: Let $g(t) \doteq \frac{t}{C(2Rt+ \alpha_1+ \alpha_2)}$. Since $t \in [0,T]$, $g(t)$ takes its supremum value in the interval, which we denote by $K$ (i.e., $sup_{t \in [0,T]} g(t)=K$). We observe
\begin{align*}
&\lVert f_5(t, V_{out}^{1}(t)) - f_5(t, V_{out}^{2}(t)) \rVert = \\
& \Bigl\lVert \frac{\bigl (-V_{out}^{1}(t) +\vdd \bigr ) t}{C(2Rt+ \alpha_1+ \alpha_2)}- \frac{\bigl (-V_{out}^{2}(t) +\vdd \bigr ) t}{C(2Rt+ \alpha_1+ \alpha_2)} \Bigr\rVert=\\
& \Bigl\lVert \frac{-t}{C(2Rt+ \alpha_1+ \alpha_2)} \cdot (V_{out}^{1}(t)- V_{out}^{2}(t)) \Bigr\rVert \leq | K | \bigl\lVert (V_{out}^{1}(t)- V_{out}^{2}(t)) \bigr\rVert,
\end{align*}
which concludes the proof for $f_5$. The proof for $f_3$ and $f_4$ follows the same route; we only sketch the proof of 
the Lipschitz continuity for $f_3$. We can write
\begin{align*}
& \lVert f_3(t, V_{out}^{1}(t)) - f_3(t, V_{out}^{2}(t)) \rVert =
& \bigl\lVert \frac{-(V_{out}^{1}(t)- V_{out}^{2}(t))}{\frac{\alpha_1}{t+ \Delta}+\frac{\alpha_2}{t}+ 2R} \bigr\rVert.
\end{align*}
The fact that $t$ and $\Delta$ both belong to the closed interval $[0,T]$ provides us with a Lipschitz constant $L$, 
which is independent of $t$. Consequently,
\begin{align*}
& \lVert f_3(t, V_{out}^{1}(t)) - f_3(t, V_{out}^{2}(t)) \rVert \leq  L \cdot \lVert  \bigl ( V_{out}^{1}(t)- V_{out}^{2}(t) \bigr) \rVert,
\end{align*}
which completes the proof.
\end{proof}

Defining $s(t)=(i_d^A(t^+),i_d^B(t^+))$ and $s_p(t)=(i_d^A(t),i_d^B(t))$, we can again instantiate \cref{def:dhm} by the choice function

\begin{align}
a_c(s(t)) = \begin {cases}
f_1(t, V_{out}(t)) &   \ \ s(t)=(1,0)\\ 
f_2(t, V_{out}(t)) &  \ \  s(t)=(0,1) \\
f_3(t, V_{out}(t)) &  \ \ s(t)=(0,0), s_p(t)=(1,0)\\
f_4(t, V_{out}(t)) &  \ \ s(t)=(0,0), s_p(t)=(0,1)\\
f_5(t, V_{out}(t)) &  \ \ s(t)=(0,0), s_p(t)=(1,1)\\
f_6(t, V_{out}(t)) &  \ \ s(t)=(1,1) \nonumber
\end {cases}
\end{align}
which, according to \cref{Eq0}, results in $d V_{out}(t)/dt$ being

\begin{align}
\begin {cases}
\frac{-V_{out}(t)}{CR_{nA}} &   \ \ s(t)=(1,0)\\ 
\frac{-V_{out}(t)}{CR_{nB}} &  \ \  s(t)=(0,1) \\
\frac{\bigl (-V_{out}(t) +\vdd \bigr ) t(t+\Delta_t)}{C \bigl( 2Rt^2+(\alpha_1+\alpha_2+2 \Delta_t R)t + \alpha_1 \Delta_t \bigr)} &  \ \ s(t)=(0,0), s_p(t)=(1,0)\\
\frac{\bigl (-V_{out}(t) +\vdd \bigr ) t(t+\Delta_t)}{C \bigl( 2Rt^2+(\alpha_1+\alpha_2+2 \Delta_t R)t + \alpha_2 \Delta_t \bigr)} &  \ \ s(t)=(0,0), s_p(t)=(0,1)\\
\frac{\bigl (-V_{out}(t) +\vdd \bigr ) t}{C ( 2Rt+ \alpha_1+\alpha_2  )}  &  \ \ s(t)=(0,0), s_p(t)=(1,1)\\
\frac{-V_{out}(t)}{C}(\frac{1}{R_{n_A}}+\frac{1}{R_{n_B}}) &  \ \ s(t)=(1,1). \nonumber
\end {cases}
\end{align}

\section{Conclusions}	
\label{sec:conclusions}

We presented a general continuity proof for a broad class of first-order thresholded
hybrid models, as they arise naturally in digital circuits. We showed that, under
mild conditions regarding causality, digitized hybrid gates could be composed to form
circuits with unique and well-behaved executions. We concluded with concrete gate model instantiations of our model.

\vspace{-0.2cm}

\bibliographystyle{ACM-Reference-Format}
\bibliography{mybib}

\end{document}